\newtheorem{proposition}{Proposition}
\newtheorem{corollary}{Corollary}
\title{On modelling bicycle power-meter measurements}
\author{%
	Tomasz Danek\footnote{%
		AGH\,---\,University of Science and Technology, Krak\'ow, Poland: {\tt tdanek@agh.edu.pl}}\,,
	Michael A. Slawinski\footnote{%
		Memorial University of Newfoundland, Canada: {\tt mslawins@mac.com}; {\tt theodore.stanoev@gmail.com}}\,,%
	\addtocounter{footnote}{-1}
	Theodore Stanoev\footnotemark
}
\date{}
\begin{document}
\maketitle
\begin{abstract}
We combine power-meter measurements with GPS measurements to study the model that accounts for the use of power by a cyclist.
The model takes into account the change in elevation and speed along with adverse effects of air, rolling and drivetrain resistance.
The focus is on estimating the resistance coefficients using numerical optimization techniques to maintain an agreement between modelled and measured power-meter values, which accounts for the associated uncertainties.
The estimation of coefficients is performed for two typical scenarios of road cycling under windless conditions, along a course that is mainly flat as well as a course of near constant inclination.
Also, we discuss relations between different combinations of two model parameters, where other quantities are constant, by the implicit function theorem.
Using the obtained estimates of resistance coefficients for the two courses, we use the mathematical relations to make inferences on the model and physical conditions.
Along with a discussion of results, we provide two appendices.
In the first appendix, we illustrate the importance of instantaneous cadence measurements.
In the second, we consider the model in constrained optimization using Lagrange multipliers.
\end{abstract}
\section{Introduction}
\label{sec:Introduction}
Using power meters to study cycling performance combines two distinct realms absent of an explicit relation: mechanical measurements and physical conditions.
A hypothetical relation between them is offered by mathematical modelling.
For instance from power metres alone, it is not possible to determine whether or not a high power measurement that results in a low ground speed is the consequence of a strong headwind, a steep climb, their combination, or a completely different factor, such as an unpaved road.
Any determination can only be postulated, under various assumptions, as a model, without a claim to uniqueness of solution.

Many studies examine the physics of cycling.
For instance, there are wind-tunnel studies to measure the aerodynamics of bicycle wheels~\citep[e.g.,][]{GreenwellEtAl1995}, and the studies to estimate the accuracy of power measurements based on the frequency of the pedal-speed measurements~\citep[e.g.,][]{Favero2018}.
There are studies to examine power required to overcome the effect of winds, taking into account tire pressure, wheel radius, altitude and relative humidity~\citep[e.g.,][]{OldsEtAl1995}, as well as the aerodynamic drag, rolling resistance, and friction in the drivetrain~\citep[e.g.,][]{MartinEtAl1998}.
There are studies to estimate model parameters from measurements on the road~\citep[e.g.,][]{Chung2012} and to devise optimal speeds for time trials on closed circuits in the presence of wind~\citep[e.g.,][]{Anton2013}.
There are studies to investigate the aerodynamics of track cycling to predict the individual-pursuits times~\citep[e.g.,][]{UnderwoodJermy2014} and to simulate cyclist slip and steer angles necessary to navigate turns on a banked track~\citep[e.g.,][]{FittonSymons2018}.
There are graduate theses in mechanical engineering~\citep[e.g.,][]{Moore2008,Underwood2012}.
It is evident that the science of cycling is a rich field that combines theoretical, computational and experimental aspects of such disciplines as mathematical physics, fluid mechanics, as well as the optimization and approximation theories.

We begin this article by formulating a mathematical model that accounts for the power generated by a cyclist to propel the motion of a bicycle-cyclist system, in terms of the forces that oppose its motion.
Next, we present power-meter and GPS measurements for a flat and inclined course in Northwestern Italy: the former is between Rivalta Bormida and Pontechino, in Piedmont; the latter is between Rossiglione and Tiglieto, in Liguria. 
We use the model and data to estimate parameter values of air, rolling and drivetrain resistance by seeking an acceptable agreement between obtained measurements and model retrodictions, as well as examining their uncertainties.
Then, by invoking the implicit function theorem, we derive explicit expressions for the rates of change between the model parameters and use their values to make model and physical inferences.
We proceed by considering various aspects of the model, such as the one-to-one relationship between power and speed, derivations of terms used in the model and the effects of wheel rotation.
We conclude by a discussion of results.

This article contains two appendices.
In the first appendix, we discuss the discrepancies of power-meter calculations based on instantaneous cadence as opposed to its revolution average.
In the second, we perform optimizations of ground speed based on physical constraints using the method of Lagrange multipliers, which are familiar to mathematical physicists, but might be less so to a broad range of sport scientists.
The same is true of Sections~\ref{sec:AirDens} and \ref{sec:AirResCoeff}.
We include them as auxiliary material to enhance and facilitate the understanding of the presented material. 
\section{Formulation}
\subsection{Mathematical model}
We consider a mathematical model of the power required to overcome the forces, $F_{\!\leftarrow}$, that oppose the translational motion, for ground speed $V_{\!\rightarrow}$, of a bicycle-cyclist system.
Herein,
\begin{align}
	\label{eq:model}
	P
	&=
	F_{\!\leftarrow}\,V_{\!\rightarrow}
	\\
	\nonumber 
	&=
	\quad\frac{
		\overbrace{\vphantom{\left(V\right)^2}\,m\,g\sin\theta\,}^\text{change in elevation}
		+
		\!\!\!\overbrace{\vphantom{\left(V\right)^2}\quad m\,a\quad}^\text{change in speed}
		+
		\overbrace{\vphantom{\left(V\right)^2}
			{\rm C_{rr}}\!\!\!\underbrace{\,m\,g\cos\theta}_\text{normal force}
		}^\text{rolling resistance}
		+
		\overbrace{\vphantom{\left(V\right)^2}
			\,\tfrac{1}{2}\,{\rm C_{d}A}\,\rho\,
			(
				\!\!\underbrace{
					V_{\!\rightarrow}+w_{\leftarrow}
				}_\text{air-flow speed}\!\!
			)^{2}\,
		}^\text{air resistance}
	}{
		\underbrace{\quad1-\lambda\quad}_\text{drivetrain efficiency}
	}\,V_{\!\rightarrow}\,,
\end{align}
where $F_{\!\leftarrow}$ consists of the following quantities\footnote{%
For consistency with power meters, whose measurements are expressed in watts, which are $\rm{kg\,m^2/s^3}$\,, we use the {\it SI} units for all quantities.
Mass is given in kilograms,~$\rm{kg}$\,, length in meters,~$\rm{m}$\,, and time in seconds,~$\rm{s}$\,; hence, speed is in $\rm{m/s}$\,, change in speed in $\rm{m/s^2}$\,, and force in newtons,~$\rm{kg\,m/s^2}$\,; angles are in radians.
Also, for the convenience of the reader, we round all presented values to four significant figures.}:
\begin{itemize}
	\item 
	$m$\,: combined mass of the bicycle and cyclist, whose units are kg;
	\item 
	$g$\,: acceleration due to gravity, whose effects are illustrated in Figure~\ref{fig:FigNewton} and units are ${\rm m/s}^2$;
	\item 
	$\theta$\,: slope of an incline, whose units are rad;
	\item 
	$a$\,: temporal rate of change of ground speed, $V_{\!\rightarrow}$, whose acceleration units are ${\rm m/s}^2$;
	\item 
	$\rho$\,: air density, formulated in Section~\ref{sec:AirDens}, is
	\begin{equation}
		\label{eq:DenAlt}
		\rho = 1.225\exp(-0.0001186\,h)\,,
	\end{equation}
	whose units are ${\rm kg}/{\rm m}^3$, and where $h$ is the altitude, in metres, above the sea level;
	\item 
	$w_{\leftarrow}$\,: wind component opposing the motion, whose units are m/s;
	\item 
	$\rm C_{rr}$\,: unitless rolling-resistance coefficient; in a manner analogous to the friction on the plane inclined by~$\theta$, $\rm C_{rr}$ is a proportionality constant between the maximum force,~$mg$\,, and the force normal to the surface,~$mg\cos\theta$;
	\item 
	$\rm C_{d}A$\,: air-resistance coefficient, whose formulated is discussed in detail in Section~\ref{sec:AirResCoeff}, which is the product of a unitless drag coefficient,~$\rm C_{d}$\,, and a frontal surface area,~$\rm A$, whose units are $\rm m^2$;
	\item 
	$\lambda$\,: unitless drivetrain-resistance coefficient to account for the loss of power between the power meters and the propelling rear wheel; if power meters are in the pedals, $\lambda$ includes the resistance of bottom bracket, chain, rear sprocket and rear-wheel hub; it also includes losses due to the flexing of the frame; if power meters are in the rear-wheel hub,~${\lambda\approx 0}$.
\end{itemize}
In keeping with the translational motion of the bicycle-cyclist system, we omit the effects due to the rotation of wheels in model~\eqref{eq:model}.
Implicitly, these effects are included in coefficients $\rm C_dA$, $\rm C_{rr}$ and $\lambda$.
However, we discuss how to accommodate explicitly their effects on air resistance in Section~\ref{sec:AirResRot} and on change in speed in Section~\ref{sec:RotEff}.
Furthermore, to include air density as an input parameter, we do not explicitly invoke humidity or atmospheric pressure.
The four summands in the numerator of model~\eqref{eq:model} are forces to account for
\begin{itemize}
	\item
	change in elevation: increases the required power if $\theta>0$\,, decreases if $\theta<0$ and has no effect if $\theta=0$\,; it is associated with the change in potential energy,
	\item
	change in speed: increases the required power if $a>0$\,, decreases if $a<0$ and has no effect if $a=0$\,; it is associated with the change in kinetic energy, which is not lost unless the rider brakes,
	\item
	rolling resistance: increases the required power,
	\item 
	air resistance: increases the required power if the speed of the air flow relative to the cyclist is positive, $(V_{\!\rightarrow}+w_{\leftarrow})>0$\,, decreases if $(V_{\!\rightarrow}+w_{\leftarrow})<0$ and has no effect if $(V_{\!\rightarrow}+w_{\leftarrow})=0$.
\end{itemize}
Similar models\,---\,exhibiting a satisfactory empirical adequacy\,---\,are used in other studies~\citep[e.g.,][]{MartinEtAl1998}.
\begin{figure}
\centering
\includegraphics[scale=0.4]{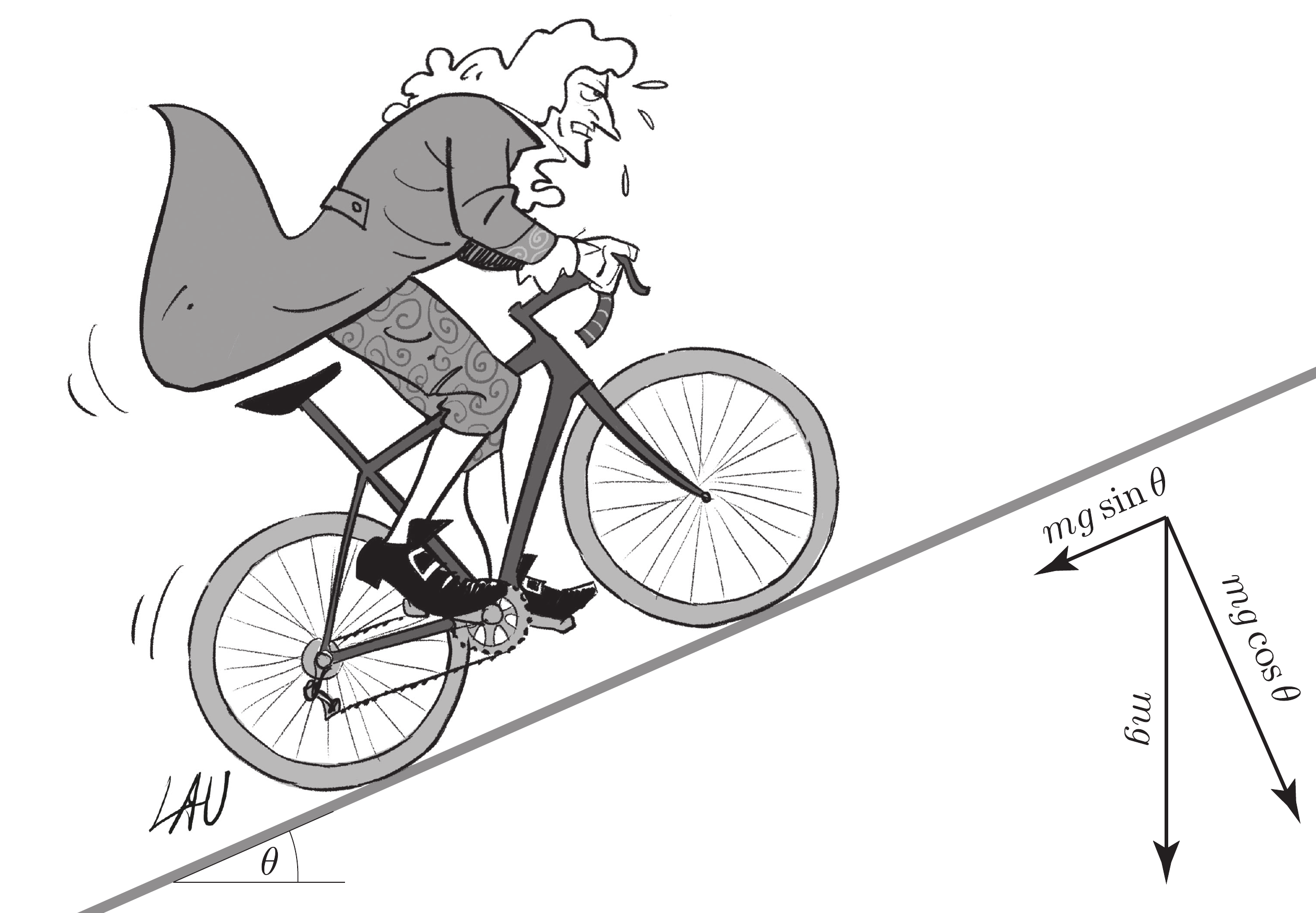}
\caption{\small Isaac Newton subject to the effects of gravity in model~\eqref{eq:model}.}
\label{fig:FigNewton}
\end{figure}

\subsection{Physical measurements}
\label{sec:PhysMeas}
Power is the rate at which work is done; hence, it is equal to the amount of work divided by the time it takes to do it, which is tantamount to the product of force and speed,
\begin{equation}
	\label{eq:formula}
	p
	=
	f_{\circlearrowright} v_{\circlearrowright}
	\,.
\end{equation}
In the context of cycling, $f_{\circlearrowright}$ is the force applied to pedals and $v_{\circlearrowright}$ is the speed with which the rotating pedals cover the distance along the circumference of their rotation, which means that $v_{\circlearrowright}$ is proportional to the length of the crank.

As stated in the very first sentence of Section~\ref{sec:Introduction}, our study combines two distinct realms: mechanical measurements and physical conditions.
We assume that\,---\,at any instant\,---\,the measured value of~$p$\,, in expression~\eqref{eq:formula}, is tantamount to the value of $P$ used in model~\eqref{eq:model}.
This is not the case if, for instance, pedalling is not continuous.
Nor it needs to be the case for a fixed-gear bicycle, as discussed in Section~\ref{sub:Qualifier}\,; $v_{\circlearrowright}$ might not be only an immediate consequence of $f$\,, but also of the momentum of bicycle-cyclist system.

For all measurements, we use Assioma DUO (art. 772-02) power meters, which measure power on both pedals.
The power meters use the Assioma IAV Power system~\citep{Favero2018}, which relies on a three-axis gyroscope along with proprietary algorithms to measure the angular velocity variation during each pedal stroke and, thus, results in a $\pm1\%$ accuracy rating~\citep[Technical features]{Favero2017}.
Herein, with the method used by \citeauthor{Favero2018}, $v_{\circlearrowright}$ is recorded as an instantaneous speed, not an average per revolution.
The importance of this distinction is detailed in Appendix~\ref{app:AvgPedSpeed}.

At the same time that the power-meter information is collected, the altitude and ground speed of the bicycle-cyclist system is collected using the Polar M460 GPS bike computer, which is mounted on top of the handlebars.
The GPS samples data once per second, with altitude and ascent/descent resolution rated at one and five metres, respectively~\citep[Technical Specification]{PolarM460}.
The information is collected for a flat and inclined course, for which the respective details are discussed in Sections~\ref{sec:FlatCourseVals} and~\ref{sec:InclinedCourse}.

To relate the measured power to the surrounding conditions, we mediate between the two sources of information by model~\eqref{eq:model}.
The power meter provides information regardless of the external conditions.
GPS provides partial information about the surroundings independently of the power output.
The model\,---\,based on physical principles\,---\,serves as a hypothetical relation between them.
It must remain hypothetical since there is no explicit relation between the two sources of information.
\section{Extraction of resistance coefficients}
\label{sec:NumEx}
In general, obtaining a unique result for the values of the resistance coefficients by minimizing the misfit between the right-hand side of expression~(\ref{eq:formula}), which represents measurements, and the right-hand side of model~\eqref{eq:model}, which represents a model, is impossible.
Namely, the misfit function might have several minima, with the global one not necessarily localized in the region for which the values have any physical interpretations.
For instance, $\rm C_dA$ is an area and, as such, it cannot have a negative value~\citep[e.g.,][]{Chung2012}.
Consequently\,---\,even though it is possible that a combination of model parameters that includes $\rm C_dA\leqslant0$ might lead to the best numerical result\,---\,we only consider positive values of $\rm C_dA$, $\rm C_{rr}$ and $\lambda$ to maintain a physical meaning to the results.
Moreover, the optimization relies on measurements, which are subject to experimental errors, including limitations of the GPS accuracy.

Also, different combinations of values may give the same result or the obtained values may differ between flat or hilly conditions.
To that end, we consider measurements collected on a mainly flat and nearly constant incline course in Northwestern Italy, whose relief maps and elevation plots are provided in Figure~\ref{fig:CronoMaps}.
Herein, both relief maps are sourced from~\citet{OpenStreetMap} and the altitude data, used for the elevation plots, from~\citeauthor{StravaRivaltaPontechino}.

The flat course~\citep{StravaRivaltaPontechino} consists of a 5.5-kilometre stretch on SP201, in an open valley from Rivalta Bormida to Pontechino, in Piedmont.
As indicated by the elevation plot in Figure~\ref{fig:CronoRivaltaPontechino}, the course has a little change in altitude, which is verified by comparing the small discrepancy between the average, \mbox{$\overline h = 145.4\,{\rm m}\pm5.355$\,m}\,, and median, 146.154\,m, altitudes of the course.
The inclined course~\citep{StravaRossiglioneTiglieto} consists of a 8-kilometre climb of a nearly constant inclination on SP41, between Rossiglione and Tiglieto, in Liguria.
For the elevation plot in Figure~\ref{fig:CronoRossiglioneTiglieto}, the grade of the slope of the line of best fit of altitudes is 4.3\%.
Both courses are well-maintained roads that provide open coverage for good signals and data collection.

A carbon fiber Cinelli XLE8R 5 road bike, with time-trial aero bars, is used for data collection on both courses.
As discussed in Section~\ref{sec:PhysMeas}, the power-meter measurements are acquired using Assioma Duo power meters and the GPS measurements are acquired using the Polar M460 GPS bike computer.
To increase the reliability of the parameter estimation, the cyclist remained in the saddle throughout both courses, did not use breaks, and pedalled continuously.
For the flat course, the cyclist maintained an aerodynamic position using the time-trial bars, whereas, for the inclined course, an upright position with hands on top of the handlebars.

Finally, for both courses, we set $m=111$\,kg, as the mass of the bicycle-cyclist system was the same, $w_\leftarrow=0$\,m/s, as the data collection was performed in windless conditions, and $g=9.81$\,m/s${}^2$, as both courses are in midlatitude and not at a significant altitude.

\begin{figure}
	\centering
	\begin{subfigure}[b]{0.49\textwidth}
		\centering
		\hspace*{\fill}
		\includegraphics[width=0.925\textwidth]{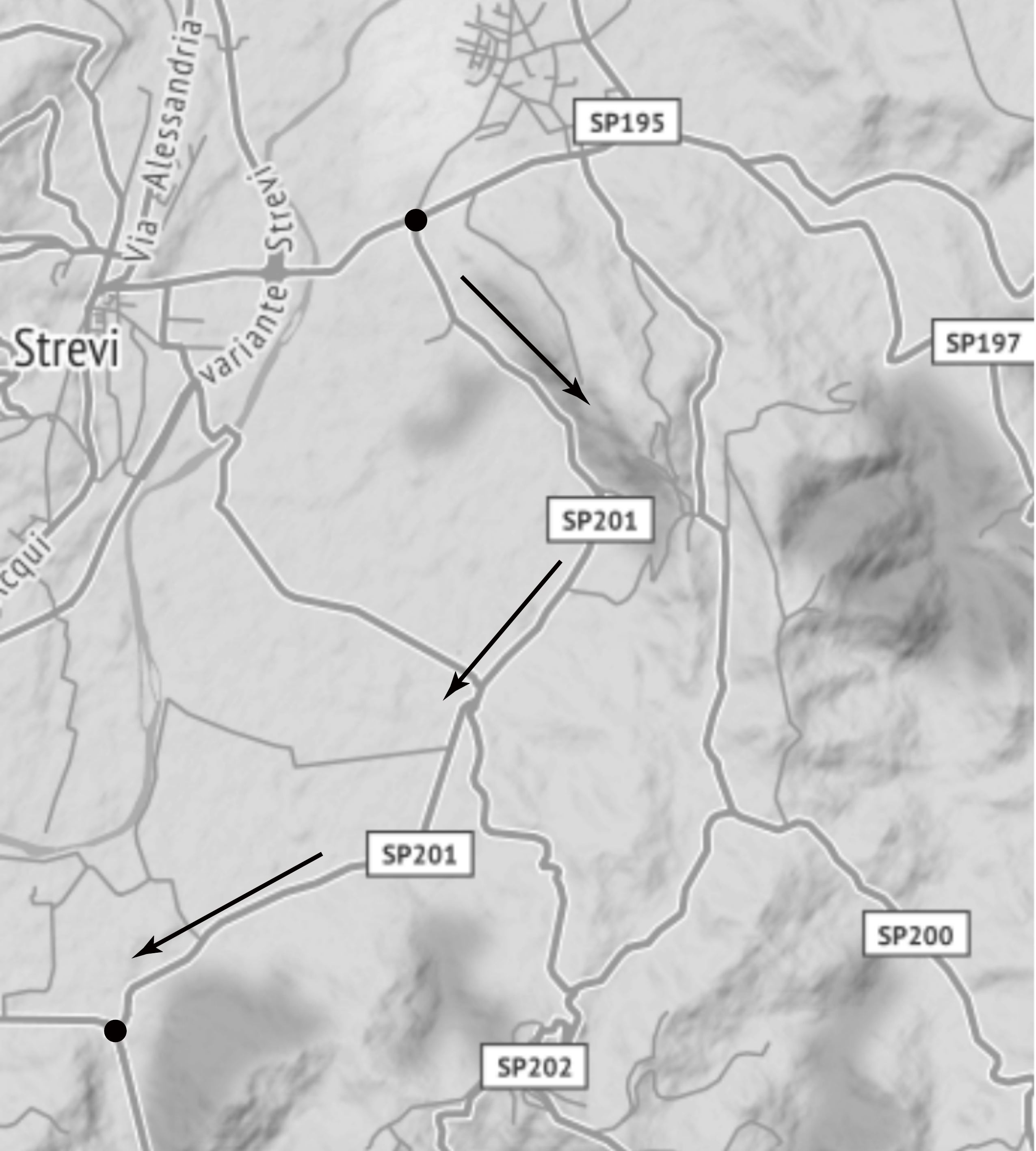}\\
		\includegraphics[width=0.95\textwidth]{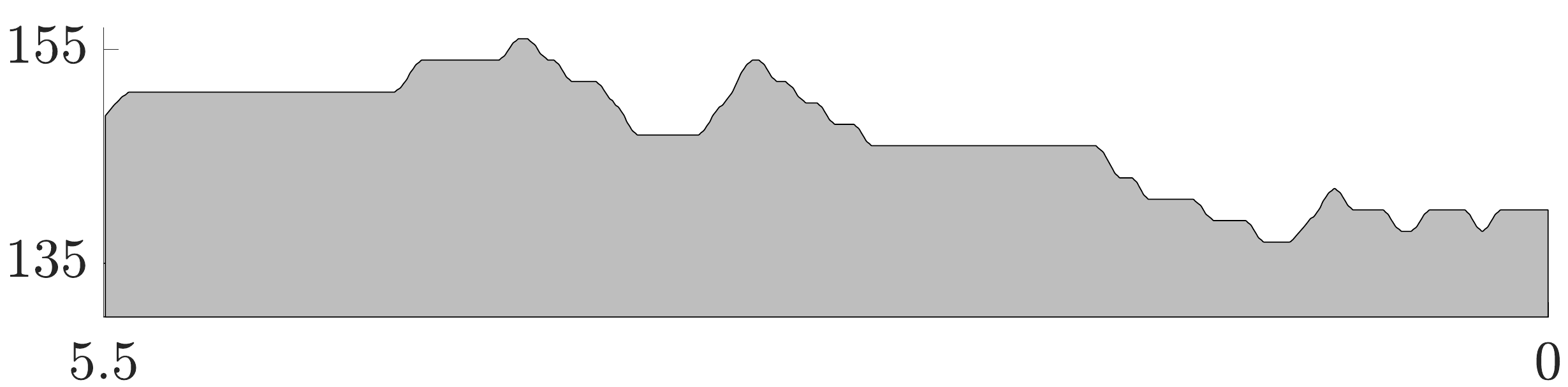}
		\caption{\small Flat course}
		\label{fig:CronoRivaltaPontechino}
	\end{subfigure}
	\hfill
	\begin{subfigure}[b]{0.49\textwidth}
		\centering
		\hspace*{\fill}
		\includegraphics[width=0.97\textwidth]{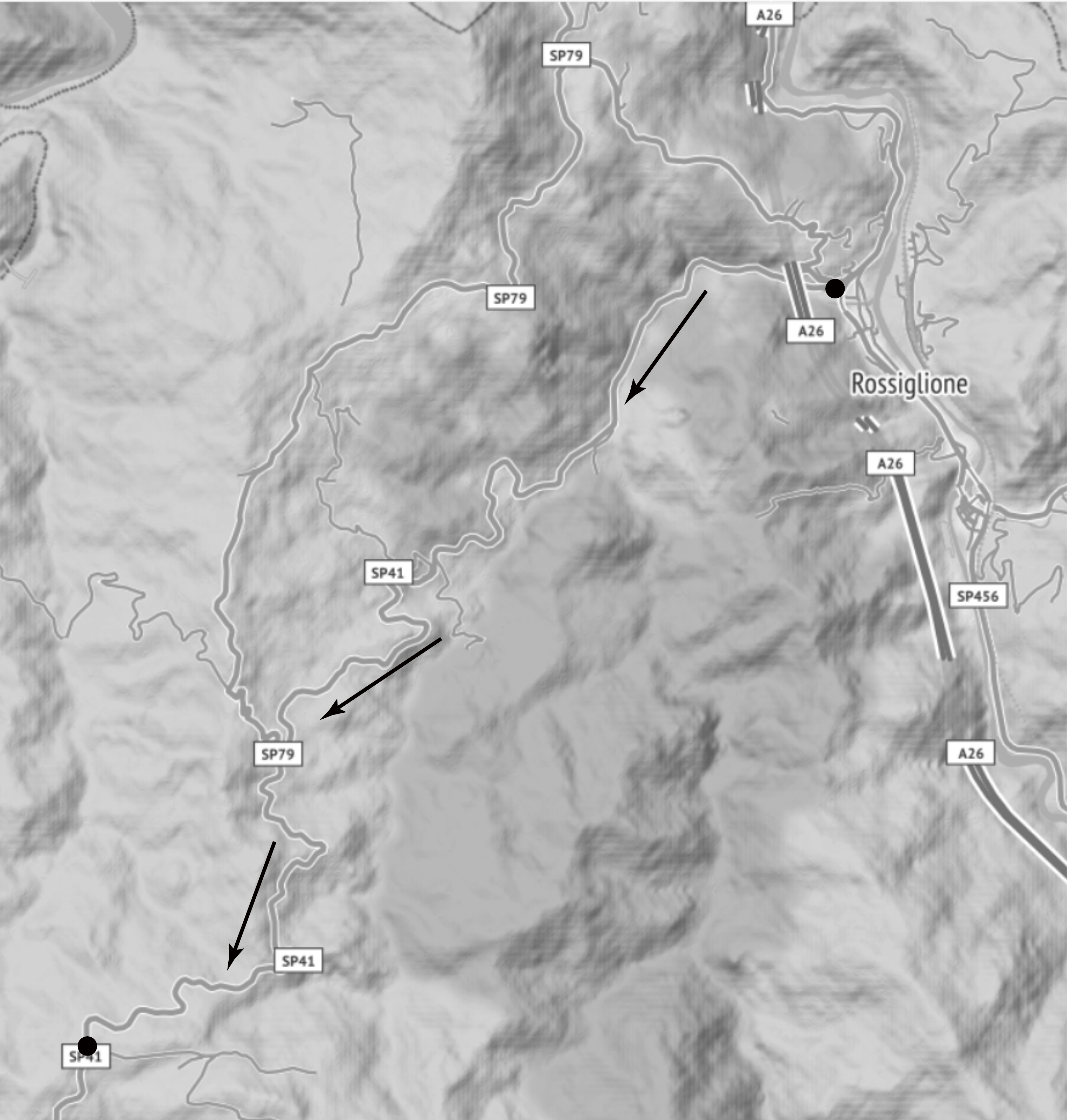}\\
		\includegraphics[width=\textwidth]{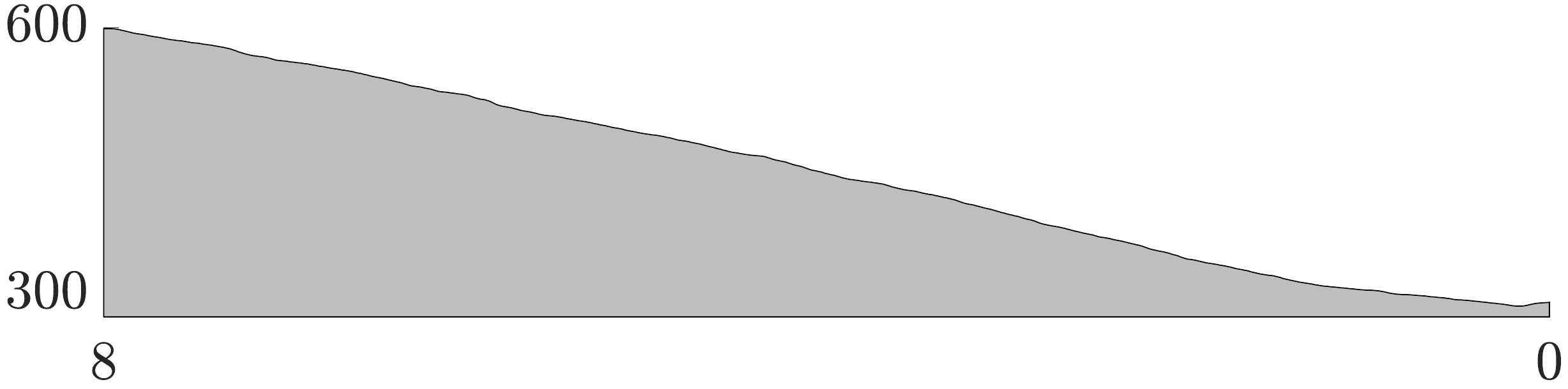}
		\caption{\small Inclined course}
		\label{fig:CronoRossiglioneTiglieto}
	\end{subfigure}
	\caption{\small
		Relief maps and elevation plots for the flat course between Rivalta Bormida and Pontechino and the inclined course between Rossiglione and Tiglieto.
		The map scales are different, and so are the scales of plot axes.}
	\label{fig:CronoMaps}
\end{figure}

\subsection{Flat-course values}
\label{sec:FlatCourseVals}
The flat-course measurements pertain to a steady ride in an aerodynamic position using time-trialing bars, beginning with a flying start.
The measurements of the force applied to and the circumferential speed of the pedals obtained during the ride are presented in Figure~\ref{fig:FigPedalVF}.
These measurements contain only the values between the first and third quartile of recorded speeds.
This restriction eliminates the data associated with the lowest and highest speeds.

\begin{figure}
	\centering
	\includegraphics[scale=0.7]{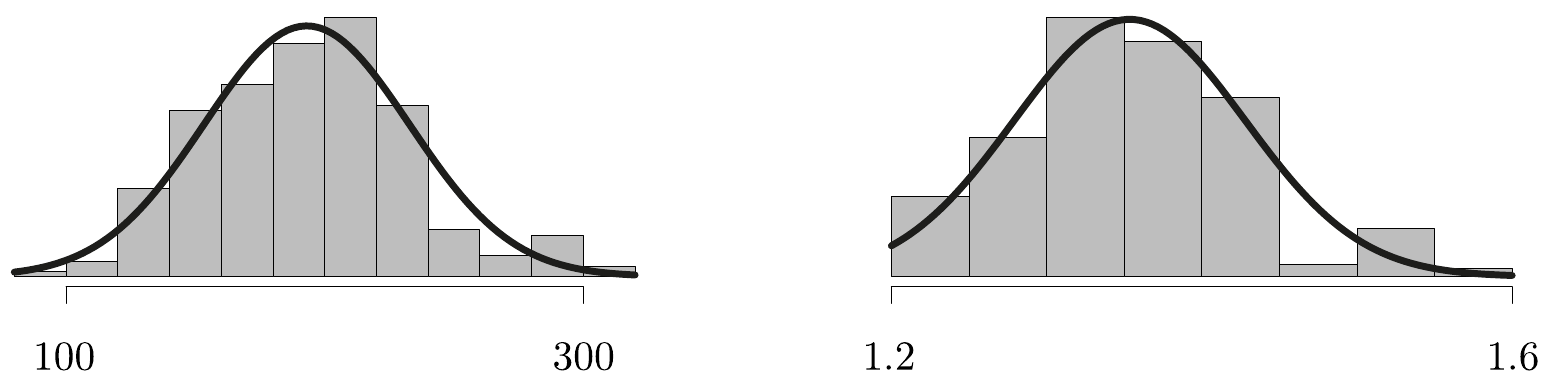}
	\caption{\small 
		Flat-course measurements. 
		Left-hand plot: force applied to pedals, $\bar f_{\circlearrowright}=193.1\,{\rm W}\pm 39.75\,{\rm W}$\,; right-hand plot: circumferential speed of pedals, $\bar v_{\circlearrowright}=1.354\,{\rm m/s}\pm 0.07448\,{\rm m/s}$
	}
	\label{fig:FigPedalVF}
\end{figure}

The power, which, according to expression~\eqref{eq:formula}, is the product of $v_{\circlearrowright}$ and $f_{\circlearrowright}$, is illustrated in the left-hand plot in Figure~\ref{fig:FigPowerVel}.
Using the ground-speed values obtained by the GPS, we group the speeds in thirty-three $0.1$-second intervals, whose centres range from $8.9$\,m/s to $12.7$\,m/s, and contain five-hundred and thirty-one speed values.
To avoid spurious results, the speed groups are restricted further to those that contain at least five values.
We present the remaining ground-speed values in the right-hand plot of Figure~\ref{fig:FigPowerVel}; the mode of the values, represented by twenty-five values, is~$9.8$\,m/s.
This approach stabilizes the search, by smoothing the measurements through averaging them over these intervals, and results in statistical information that gives an insight into the uncertainty of obtained results.

\begin{figure}
	\centering
	\includegraphics[scale=0.7]{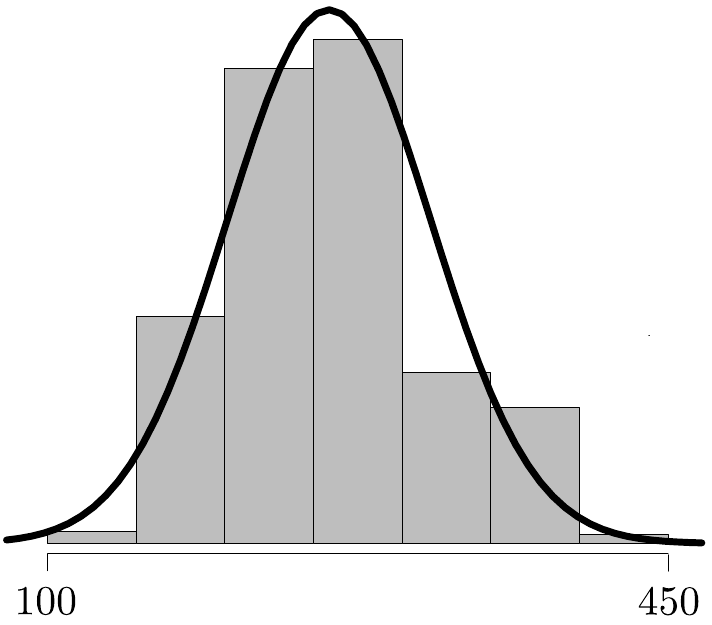}\quad
	\includegraphics[scale=0.7]{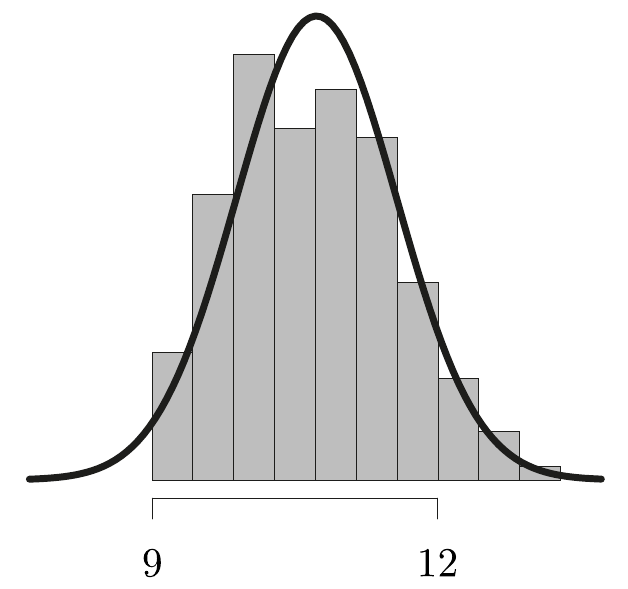}
	\caption{\small
		Flat-course measurements.
		Left-hand plot: power, from power meters, $\overline{P}=258.8\,{\rm W}\pm57.3\,{\rm W}$\,; right-hand plot: ground speed, from GPS, $\overline V_{\!\rightarrow}=10.51\,{\rm m/s}\pm0.9816\,{\rm m/s}$\,.
	}
	\label{fig:FigPowerVel}
\end{figure}

The measured power\,---\,within such a grouping\,---\,is presented in Figure~\ref{fig:FigPowerErr}; therein, standard deviations are illustrated by error bars.
To provide the values required for model~\eqref{eq:model}, we require $a$\,, $\theta$ and $\rho$\,, for each group.
They are obtained from the GPS measurements: $a$ and $\theta$ as the temporal and spatial derivatives of the measured speed and altitude, respectively, and $\rho$ by using expression~(\ref{eq:DenAlt}); these values are illustrated in Figures~\ref{fig:FigAccVel}--\ref{fig:FigDenAir}, with standard deviations illustrated by error bars.
Their average values, taken over the entire segment, are
\begin{equation}
	\label{eq:ave_a_theta_rho_flat}
	\overline{a}=0.006922\,{\rm m/s}^2\pm0.1655\,{\rm m/s}^2,\quad
	\overline{\theta} = 0.002575\,{\rm rad}\pm0.04027\,{\rm rad},\quad
	\overline\rho=1.204\,{\rm kg/m}^3\pm 0.0007652\,{\rm kg/m}^3;
\end{equation}
respectively.
They indicate a steady ride, a flat course and constant atmospheric conditions.

\begin{figure}
	\centering
	\begin{subfigure}[b]{0.475\textwidth}
		\includegraphics[width=\textwidth]{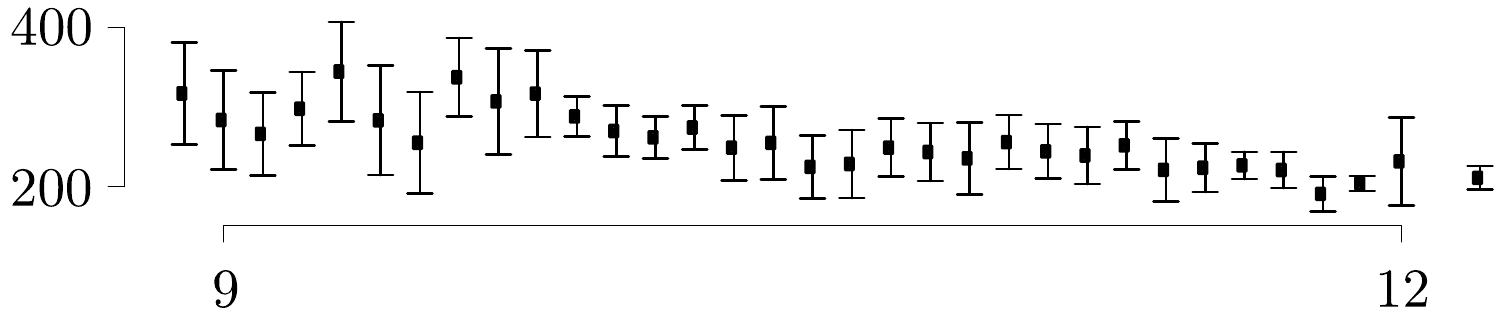}
		\caption{\small Power}
		\label{fig:FigPowerErr}
	\end{subfigure}
	\hspace*{\fill}
	\begin{subfigure}[b]{0.475\textwidth}
		\includegraphics[width=\textwidth]{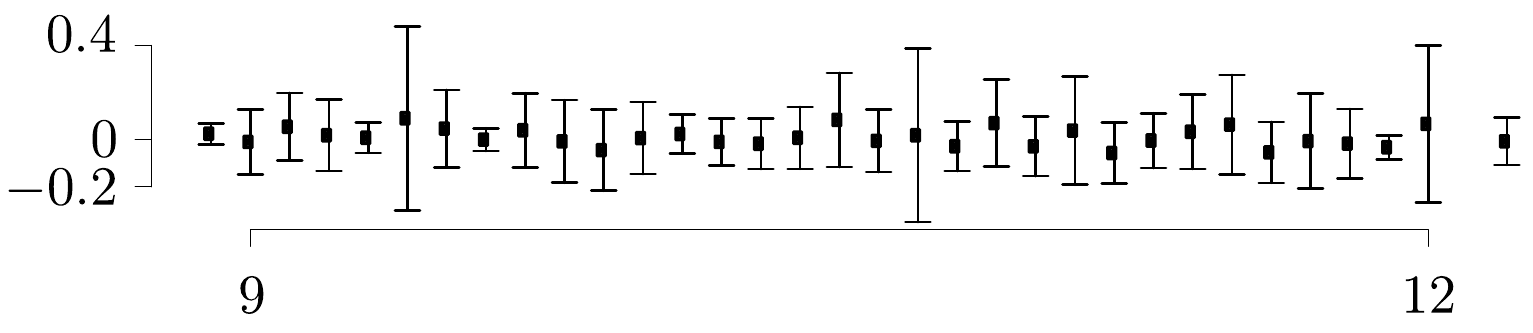}
		\caption{\small Change of speed}
		\label{fig:FigAccVel}
	\end{subfigure}
	\\
	\begin{subfigure}[b]{0.475\textwidth}
		\includegraphics[width=\textwidth]{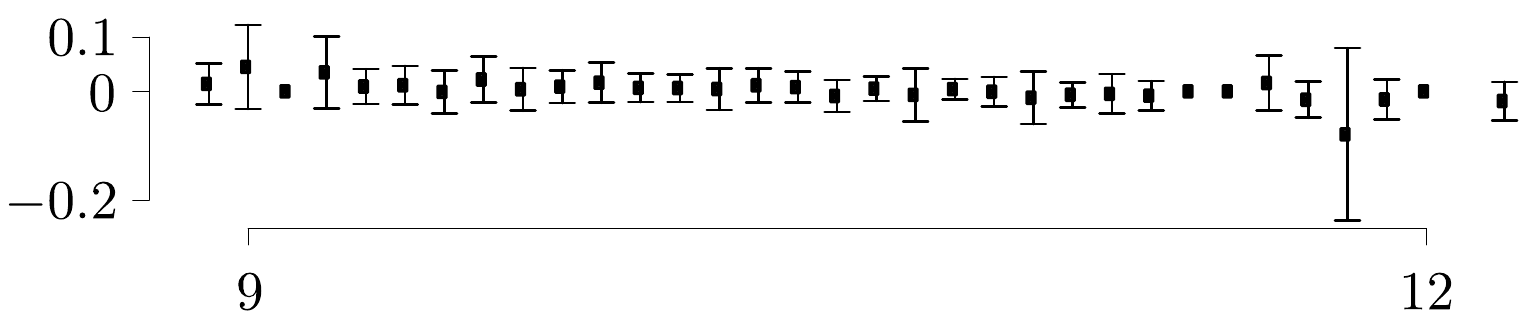}
		\caption{\small Slope}
		\label{fig:FigIncVel}
	\end{subfigure}
	\hspace*{\fill}
	\begin{subfigure}[b]{0.475\textwidth}
		\includegraphics[width=\textwidth]{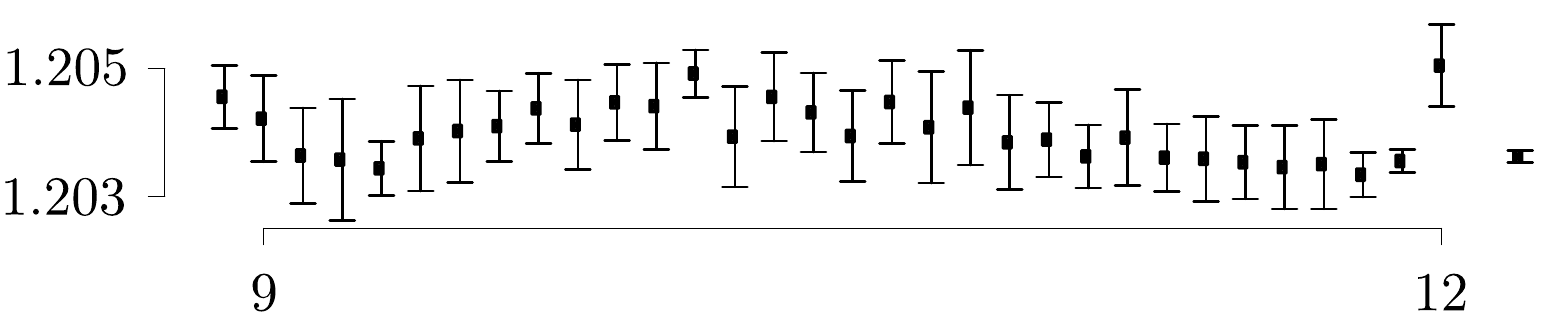}
		\caption{\small Air density}
		\label{fig:FigDenAir}
	\end{subfigure}
	\caption{\small
		Flat-course measurements of power, change of speed, slope and air density grouped into thirty-three 0.1-second intervals
	}
\end{figure}

To estimate the values of ${\rm C_{d}A}$\,, ${\rm C_{rr}}$ and $\lambda$\,, we write model~\eqref{eq:model} as
\begin{equation}
	\label{eq:misfit}
	f
	=
	P
	-
	\underbrace{
		\frac{
			mg\sin\theta
			+
			m\,a
			+
			{\rm C_{rr}}mg\cos\theta
			+
			\tfrac{1}{2}\,{\rm C_{d}A}\,\rho
			\left(V_{\!\rightarrow}+w_{\leftarrow}\right)^{2}
		}{
			1-\lambda
		}
		V_{\!\rightarrow}
	}_{F_{\!\leftarrow}V_{\!\rightarrow}}
	\,,
\end{equation}
and minimize the misfit,~$\min f$\,, using the Nelder-Mead algorithm, implemented in R.
The grouped values, with their standard deviations, are used as inputs for a local optimization.
Each group is treated separately and, hence, the statistics of its input parameters are different than for the entire set. 
In view of the expected values, a starting point for the local optimization is ${\rm C_{d}A}=0.3\,{\rm m}^2$, ${\rm C_{rr}}=0.005$ and $\lambda=0.035$.

The process is repeated ten thousand times.
The input values are perturbed in accordance with their Gaussian distributions, since\,---\,according to the central limit theorem\,---\,measurements affected by many independent processes tend to approximate such a distribution. 
We obtain optimal values with their standard deviations, 
\begin{equation*}
	{\rm C_{d}A} = 0.2607\,{\rm m}^2\pm0.002982\,{\rm m}^2
	\,,\,\,
	{\rm C_{rr}} = 0.002310\pm0.005447
	\,,\,\,
	\lambda = 0.03574\pm0.0004375
	\,,
\end{equation*}
shown in Figure~\ref{fig:FigParFlat}.
As illustrated in Figure~\ref{fig:Figfflat}, these values result in a satisfactory minimization of misfit for expression~(\ref{eq:misfit}).                     
\begin{figure}
	\centering
	\includegraphics[scale=0.7]{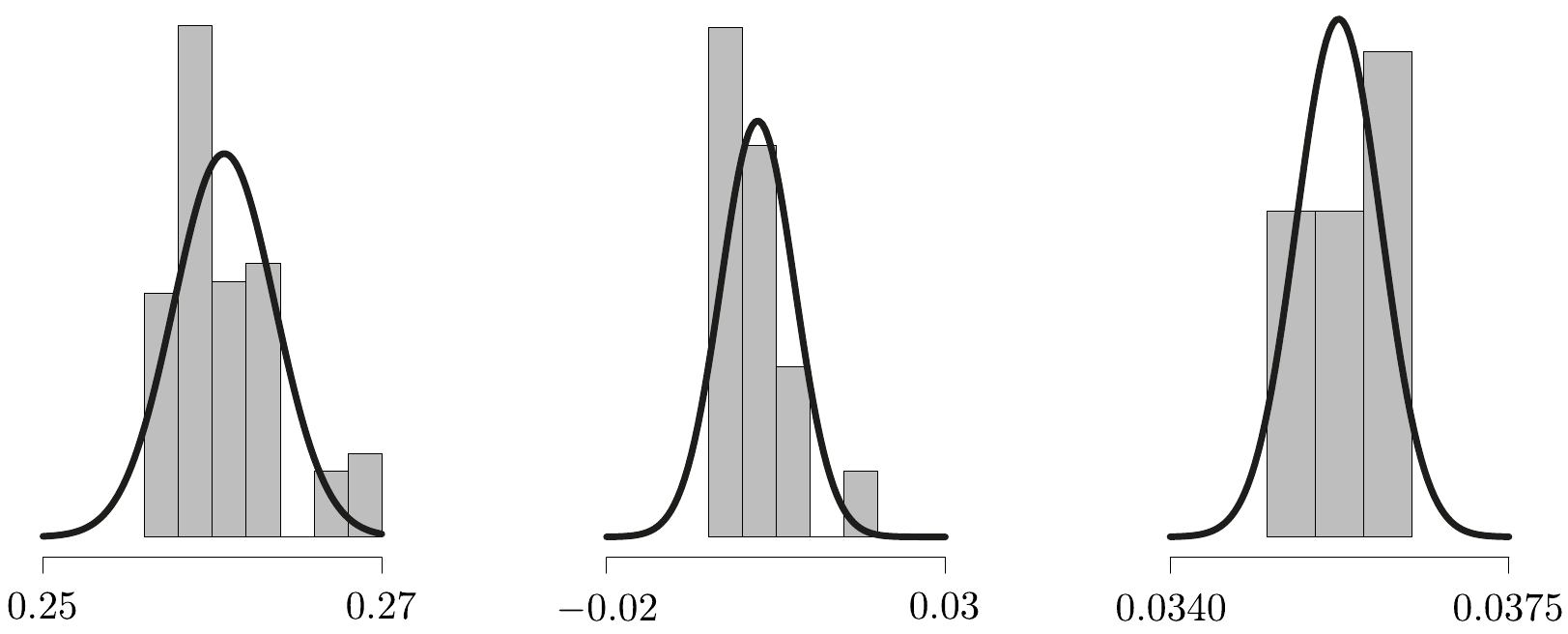}
	\caption{\small
		Flat-course optimal values; 
		left-hand plot:~${\rm C_{d}A}=0.2607\,{\rm m}^2\pm0.002982\,{\rm m}^2$\,; 
		middle plot: \mbox{${\rm C_{rr}}=0.002310\pm0.005447$}\,; 
		right-hand plot:~$\lambda=0.03574\pm0.0004375$}
	\label{fig:FigParFlat}
\end{figure}
\begin{figure}
	\centering
	\includegraphics[scale=0.7]{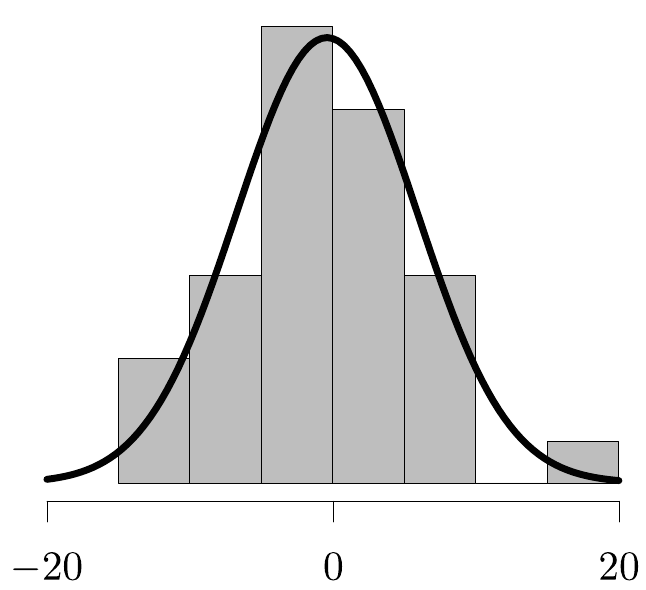}
	\caption{\small 
		Flat-course misfit of equation~(\ref{eq:misfit}): $f=0.4137\,{\rm W}\pm6.321\,{\rm W}$
	}
	\label{fig:Figfflat}
\end{figure}
Using these values, together with the average values,  over the entire segment, assuming \mbox{$\overline{a}=\overline{\theta}=0$}\,, we obtain, in accordance with model~\eqref{eq:model}, \mbox{$P = 255.3$\,W}\,, which is consistent with \mbox{$\overline{P}=258.8$\,W}\,, stated in the caption of Figure~\ref{fig:FigPowerVel}.
\subsection{Inclined-course values}
\label{sec:InclinedCourse}
The average slope of the inclined-course, recorded by the GPS, is $\overline\theta=0.04592\,{\rm rad}\pm0.1106\,{\rm rad}$, which corresponds to $2.63^\circ$ or 4.60\% and is consistent with the 4.3\% grade of the slope of the line of best fit of altitudes.
The measurements pertain to a steady ride, while remaining in the saddle throughout and hands on top of the handlebars, which is supported by an average acceleration of \mbox{$\overline a=0.001101\,{\rm m/s}^2\pm0.1015\,{\rm m/s}^2$}.
In view of the constancy of slope and steadiness of ride, we set, $\overline\theta=0.04592\,{\rm rad}$ and $\overline a = 0\,{\rm m/s^2}$\,, respectively.
Since the change of altitude is negligible\,---\,in the context of air density\,---\,we set $\overline\rho=1.168\,{\rm kg/m}^3\pm0.001861\,{\rm kg/m}^3$\,, which\,---\,under standard meteorological conditions\,---\,corresponds to the altitude of~$400$\,m\,.

For the estimation of $\rm C_dA$, $\rm C_{rr}$ and $\lambda$ on the inclined course, we apply the same restriction strategies used in for the flat course.
We restrict the force and circumferential speed measurements to contain only the values between the first and third quartile of recorded speed, and present their product, which is the power, in the left-hand plot of Figure~\ref{fig:FigSpeedPowerSteep}.
Also, we apply the same ground-speed grouping strategy, which results in eleven~$0.1$-second speed intervals, whose centres range from $3.7$\,m/s to $4.7$\,m/s, and contain three-hundred and ninety-two values.
We present these speeds in the right-hand plot of Figure~\ref{fig:FigSpeedPowerSteep}, wherein the mode is $4.2$\,m/s, represented by seventy-eight data points; their respective means are $\overline V_{\!\rightarrow}=4.138\,{\rm m/s}\pm0.2063\,{\rm m/s}$ and $\overline P=286.6\,{\rm W}\pm33.11\,{\rm W}$.
Proceeding with the same optimization as the flat course, using the same starting values as input, with random perturbations within their standard deviations, the process is repeated ten thousand times to obtain the distribution of \mbox{$\rm C_dA$, $\rm C_{rr}$ and $\lambda$}, which we illustrate in Figure~\ref{fig:FigParSteep}; their values are
\begin{equation}
	{\rm C_dA} = 0.2702\,{\rm m}^2\pm0.002773\,{\rm m}^2,\quad
	{\rm C_{rr}}=0.01298\pm0.011,\quad
	\lambda=0.02979\pm0.004396.
\end{equation}

\begin{figure}
	\centering
	\includegraphics[scale=0.7]{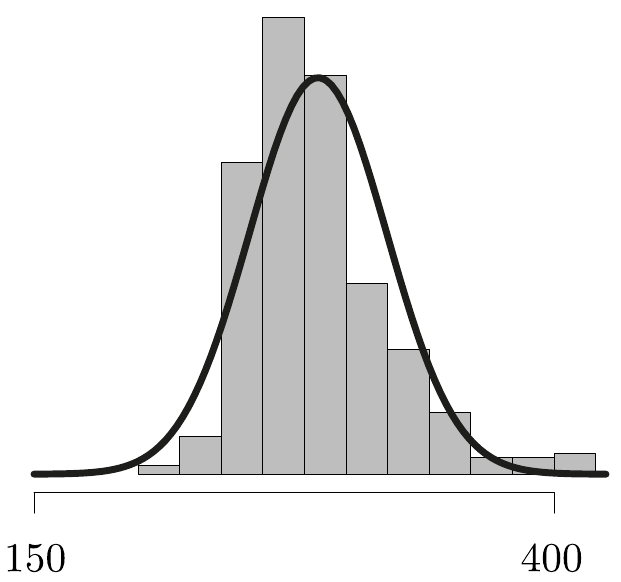}\qquad
	\includegraphics[scale=0.7]{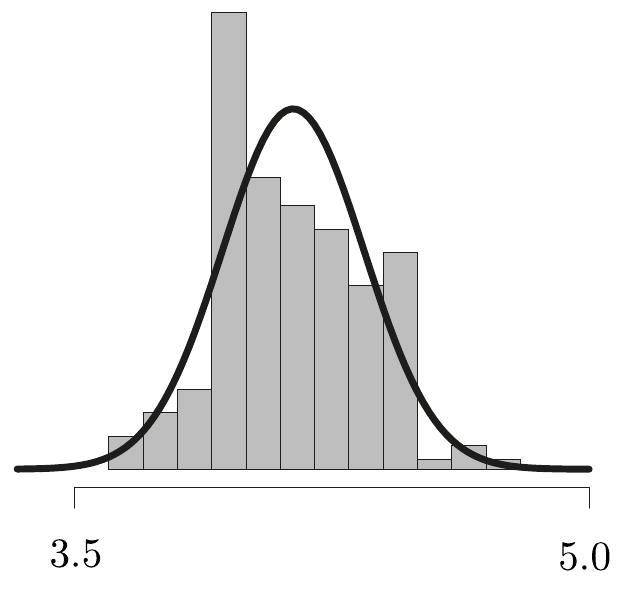}
	\caption{\small
		Inclined-course measurements.
		Left-hand plot: power, from power meters, $\overline{P}=286.6\,{\rm W}\pm33.07\,{\rm W}$\,;
		right-hand plot: ground speed, from GPS measurements, $\overline V_{\!\rightarrow}=4.138\,{\rm m/s}\pm0.2063\,{\rm m/s}$\,.
	}
\label{fig:FigSpeedPowerSteep}
\end{figure}
\begin{figure}
	\centering
	\includegraphics[scale=0.7]{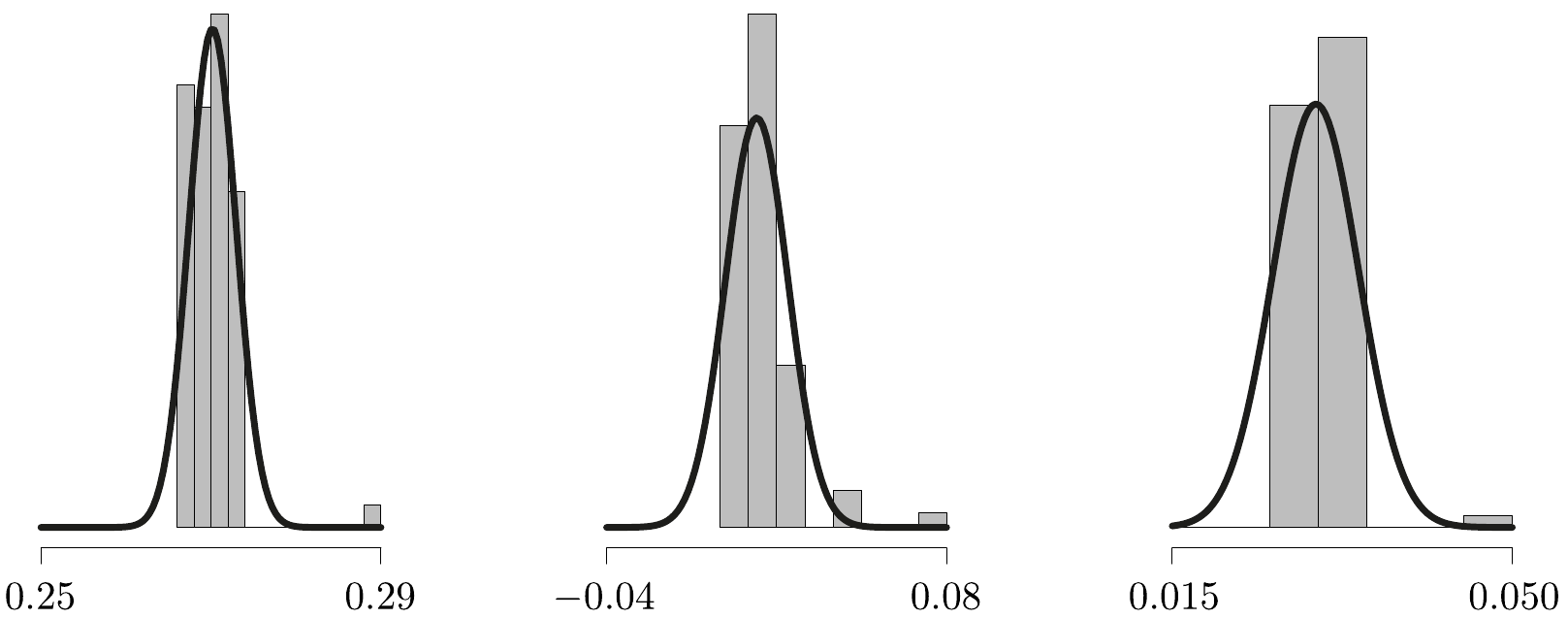}
	\caption{\small
		Inclined-course optimal values.
		left-hand plot:~${\rm C_{d}A}=0.2702\,{\rm m}^2\pm0.002773\,{\rm m}^2$\,;
		middle plot:~\mbox{${\rm C_{rr}}=0.01298\pm0.01100$}\,;
		right-hand plot:~$\lambda=0.02979\pm0.004396$}
	\label{fig:FigParSteep}
\end{figure}

\section{Rates of change of model parameters}
\label{sec:RatesOfChange}
\subsection{Implicit function theorem}
\label{sec:ImplFuncThm}
We seek relations between the ratios of quantities on the right-hand side of model~\eqref{eq:model}.
To do so\,---\,since $f$\,, stated in expression~\eqref{eq:misfit}, possesses continuous partial derivatives in all its variables at all points, except at ${\lambda=1}$\,, which is excluded by mechanical considerations, and since ${f=0}$\,, as a consequence of equation~(\ref{eq:formula})\,---\,we invoke the implicit function theorem to write
\begin{equation}
	\label{eq:Thm}
	\dfrac{\partial y}{\partial x}
	=
	-\dfrac{
		\dfrac{\partial f}{\partial x}
	}{
		\dfrac{\partial f}{\partial y}
	}
	=:
	-\dfrac{\partial_{x}f}{\partial_{y}f}
	\,,
\end{equation}
where $x$ and $y$ are any two quantities among the arguments of
\begin{equation}
	\label{eq:f}
	f(P,m,g,\theta,a,{\rm C_{rr}},{\rm C_{d}A},\rho,V_{\!\rightarrow},w_{\leftarrow},\lambda)
	\,.
\end{equation}

To use formula~(\ref{eq:Thm}), in the context of expression~\eqref{eq:misfit}, we obtain all partial derivatives of $f$\,, with respect to its arguments:
\begin{equation*}
	\partial_{P}f = 1
	,\quad
	\partial_{m}f 
	= 
	-\dfrac{
		a 
		+ 
		g\left({\rm C_{rr}}\cos\theta+\sin\theta\right)
	}{
		1-\lambda
	}
	V_{\!\rightarrow}
	\,,\quad
	\partial_{\theta}f 
	= 
	-\dfrac{
		m\,g\left(\cos\theta-{\rm C_{rr}}\sin\theta\right)
	}{
		1-\lambda
	}
	V_{\!\rightarrow}
	\,,\quad
	\partial_{a}f = -\dfrac{m\,V_{\!\rightarrow}}{1-\lambda}
	\,,
\end{equation*}
\begin{equation*}
	\partial_{\rm C_{rr}}f 
	= 
	-\dfrac{m\,g\cos\theta}{1-\lambda}V_{\!\rightarrow}
	\,,\quad
	\partial_{\rm C_{d}A}f
	= 
	-\dfrac{
		\rho
		\left(V_{\!\rightarrow}+w_{\leftarrow}\right)^{2}
	}{
		2\,(1-\lambda)
	}
	V_{\!\rightarrow}
	\,,\quad
	\partial_{\rho}f 
	= 
	-\dfrac{
		{\rm C_{d}A}
		\left(V_{\!\rightarrow}+w_{\leftarrow}\right)^{2}
	}{
		2\,(1-\lambda)
	}
	V_{\!\rightarrow}
	\,,
\end{equation*}
\begin{equation*}
	\partial_{V_{\!\rightarrow}}f
	=
	-\dfrac{
		2\,m\left(a+g\left({\rm C_{rr}}\cos\theta+\sin\theta\right)\right)
		+
		\rho\,{\rm C_{d}A}
		\left(V_{\!\rightarrow}+w_{\leftarrow}\right)
		\left(3\,V_{\!\rightarrow}+w_{\leftarrow}\right)
	}{
		2\,(1-\lambda)
	}
	\,,
\end{equation*}
\begin{equation*}
	\partial_{w_{\leftarrow}}f
	=
	-\dfrac{
		\rho\,{\rm C_{d}A}
		\left(V_{\!\rightarrow}+w_{\leftarrow}\right)
	}{
		1-\lambda
	}
	V_{\!\rightarrow}
	\,,\quad
	\partial_{\lambda}f 
	=
	-\dfrac{
		m\left(a+g\left({\rm C_{rr}}\cos\theta+\sin\theta\right)\right)
		+
		\tfrac{1}{2}\rho\,{\rm C_{d}A}
		\left(V_{\!\rightarrow}+w_{\leftarrow}\right)^{2}
	}{
		(1-\lambda)^2
	}
	V_{\!\rightarrow}
	\,.
\end{equation*}
In accordance with the definition of a partial derivative, all variables in expression~(\ref{eq:f}) are constant, except the one with respect to which the differentiation is performed.
This property is apparent in Appendix~\ref{app:LagrangeMultipliers}, where we examine a relation between differences and derivatives.
\renewcommand{\arraystretch}{1.1}
\begin{table}[h]
	\centering
	\begin{tabular}{cccc}
		\toprule
		Partial derivative & Units & Flat course & Inclined course \\
		\bottomrule\\[-1em]
		$\partial_{P}f$ & --- & 1 & 1
		\\
		$\partial_{m}f$ & ${\rm W}\,{\rm kg}^{-1}$ &
		$-0.2470\,\pm\,0.5829$ & $-2.463\,\pm\,4.644$ 
		\\
		$\partial_{\theta}f$ & W &
		$-11870\,\pm\,1109$ & $-4637\,\pm\,234.1$
		\\
		$\partial_a f$ & ${\rm W}\,{\rm m}^{-1}\,{\rm s}^2$ &
		$-1210\,\pm\,113.0$ & $-473.4\,\pm\,23.70$
		\\
		$\partial_{\rm C_{rr}}f$ & W &
		$-11870\,\pm\,1109$ & $-4639\,\pm\,233.4$
		\\
		$\partial_{\rm C_{d}A}f$ & ${\rm W}\,{\rm m}^{-2}$ &
		$-724.8\,\pm\,203.2$ & $-42.66\,\pm\,6.384$
		\\
		$\partial_{\rho}f$ & ${\rm W}\,{\rm kg}^{-1}\,{\rm m}^3$ &
		$-157.0\,\pm\,44.01$ & $-9.866\,\pm\,1.480$
		\\
		$\partial_hf$ & W &
		$0.02240\,\pm\,0.006282$ & $0.001367\,\pm\,0.0002050$
		\\
		$\partial_{V_{\!\rightarrow}}f$ & ${\rm W}\,{\rm m}^{-1}\,{\rm s}$ &
		$-56.55\,\pm\,11.82$  & $-74.43\,\pm\,124.5$
		\\
		$\partial_{w_{\leftarrow}} f$ & ${\rm W}\,{\rm m}^{-1}\,{\rm s}$ &
		$-35.96\,\pm\,6.729$ & $-5.571\,\pm\,0.5591$
		\\
		$\partial_{\lambda}f$ & W &
		$-224.4\,\pm\,88.39$ & $-293.7\,\pm\,531.4$
		\\
		\bottomrule
	\end{tabular}
	\caption{\small Values of partial derivatives for flat and inclined courses, rounded to four significant figures}
	\label{table:Partials}
\end{table}
\renewcommand{\arraystretch}{1}
\subsection{Interpretation}
\label{sec:Interpretation}
\begin{figure}
	\centering
	\includegraphics[scale=0.7]{Figfflat.pdf}
	\includegraphics[scale=0.7]{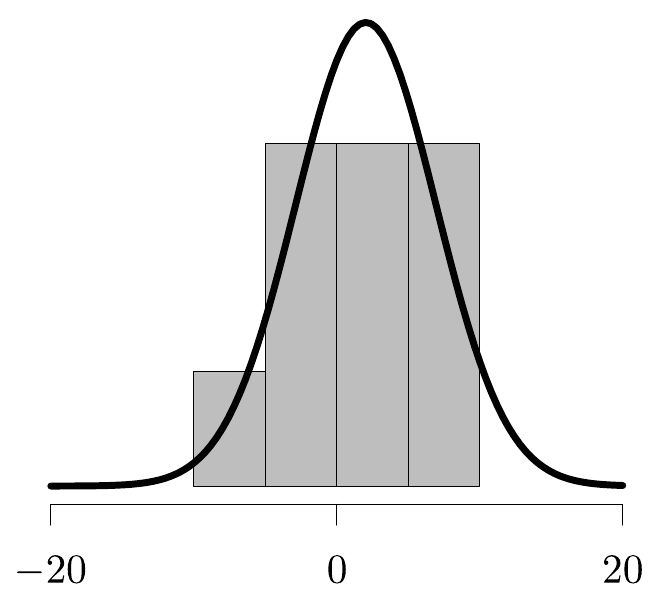}
	\caption{\small Misfit of equation~\eqref{eq:misfit}: left-hand plot: flat course, $f=0.4137\,{\rm W}\pm6.321\,{\rm W}$\,; right-hand plot: inclined course, $f=2.03\,{\rm W}\pm4.911\,{\rm W}$}
	\label{fig:Figf}
\end{figure}
\subsubsection{Preliminary comments}
To use the partial derivatives tabulated in Table~\ref{table:Partials}, we consider the measurements collected as well as optimized values from the two rides discussed in Sections~\ref{sec:FlatCourseVals} and~\ref{sec:InclinedCourse}.

As required by the implicit function theorem and as shown in Figure~\ref{fig:Figf}, $f=0$\,, in the neighbourhood of the maxima of the distributions, for both the flat and inclined courses.
Also, as required by the theorem, in formula~(\ref{eq:Thm}), and as shown in Table~\ref{table:Partials}, $\partial_{y}f\neq0$\,, in the neighbourhoods of interest, for either course.

Notably, the similarity of a horizontal spread for both plots of Figure~\ref{fig:Figf} indicates that the goodness of fit of a model is similar for both courses.
The spread is slightly narrower for the inclined course; this might be a result of a lower average speed,~$\overline V_{\!\rightarrow}$\,, which allows for more data points for a given distance and, hence, a higher accuracy of information.
\subsubsection{Model inferences}
\label{sec:ModelInferences}
The misfit minimization of equation~\eqref{eq:misfit}, $\min f$\,, treats $\rm C_{d}A$\,, $\rm C_{rr}$ and $\lambda$ as adjustable parameters.
The values in Table~\ref{table:ModelRates} are the changes of $\rm C_{d}A$ due to a change in $\rm C_{rr}$ or $\lambda$\,; in either case, the other quantities are kept constant.
Let us examine the first row.

For the flat course\,---\,in the neighbourhood of $\overline V_{\!\rightarrow}=10.51\,{\rm m/s}$ and $\overline P=258.8\,{\rm W}$\,, wherein ${\rm C_{d}A}=0.2607\,{\rm m}^2$ and ${\rm C_{rr}}=0.002310$\,---\,$\partial_{\rm C_{rr}}{\rm C_{d}A}=-16.37\,{\rm m}^2$ and, in accordance with expression~(\ref{eq:Thm}), its reciprocal is \mbox{$\partial_{\rm C_{d}A}{\rm C_{rr}}=-0.06107\,{\rm m}^{-2}$}.
We write the corresponding differentials as
\begin{equation*}
	{\rm d(C_{d}A)}
	=
	\dfrac{\partial{\rm C_{d}A}}{\partial{\rm C_{rr}}}\,
	{\rm d(C_{rr})}
	=
	{-16.37}\,{\rm d(C_{rr})}
	\quad\text{and}\quad
	{\rm d(C_{rr})}
	=
	\dfrac{\partial{\rm C_{rr}}}{\partial{\rm C_{d}A}}\,
	{\rm d(C_{d}A)}
	=
	{-0.06107}\,{\rm d(C_{d}A)}
	\,;
\end{equation*}
in other words, an increase of $\rm C_{rr}$ by a unit corresponds to a decrease of  $\rm C_{d}A$ by $16.37$ units, and an increase of $\rm C_{d}A$ by a unit corresponds to a decrease of  $\rm C_{rr}$ by ${0.06107}$ of a unit.
For the inclined course\,---\,in the neighbourhood of $\overline V_{\!\rightarrow}=4.138\,{\rm m/s}$ and $\overline P=286.6\,{\rm W}$\,, wherein ${\rm C_{d}A}=0.2702\,{\rm m}^2$ and ${\rm C_{rr}}=0.01298$\,---\,the differentials are
\begin{equation*}
	{\rm d(C_{d}A)}
	=
	\dfrac{\partial{\rm C_{d}A}}{\partial{\rm C_{rr}}}\,
	{\rm d(C_{rr})}
	=
	{-0.009195}\,{\rm d(C_{rr})}
	\quad\text{and}\quad
	{\rm d(C_{rr})}
	=
	\dfrac{\partial{\rm C_{rr}}}{\partial{\rm C_{d}A}}\,
	{\rm d(C_{d}A)}
	=
	{-108.8}\,{\rm d(C_{d}A)}
	\,;
\end{equation*}

Remaining within a linear approximation, an increase of $\rm C_{d}A$ by $1\%$ corresponds to a decrease of $\rm C_{rr}$ by $6.89\%$\,, for the flat course, and a decrease of only $0.19\%$\,, for the inclined course.
This result quantifies that the dependence between $\rm C_{d}A$ and $\rm C_{rr}$\,, within adjustments of the model, is more pronounced for the flat course than for the inclined course, as expected in view of model~\eqref{eq:model}, whose value\,---\,for the inclined course\,---\,is dominated by the first summand in the numerator, which includes neither $\rm C_{d}A$ nor $\rm C_{rr}$\,.
This result provides a quantitative justification for the observation that the dependance of the accuracy of the estimate of power on the accuracies of $\rm C_{d}A$ and $\rm C_{rr}$ varies depending on the context; it is more pronounced on flat and fast courses.

Similar evaluations can be performed using the values of derivatives contained in the second row of Table~\ref{table:ModelRates}.
Therein, an increase in $\lambda$ results in a decrease of $\rm C_{d}A$\,, with different rates, for the flat and inclined courses.
\renewcommand{\arraystretch}{1.1}
\begin{table}
	\centering
	\begin{tabular}{cccc}
	\toprule
	Partial derivative & Units & Flat course & Inclined course \\
	\bottomrule\\[-1em]
	$\partial_{\rm C_{rr}}{\rm C_{d}A}$ & ${\rm m}^2$ & 
	$-16.37\,\pm\,3.059$ & $-108.8\,\pm\,10.86$ 
	\\
	$\partial_\lambda{\rm C_{d}A}$ & ${\rm m}^2$ &
	$-0.3096\,\pm\,0.09284$ & $-6.884\,\pm\,12.47$
	\\
	\bottomrule
	\end{tabular}
	\caption{\small Model rates of change following formula~(\ref{eq:Thm}) and values, rounded to four significant figures, in Table~\ref{table:Partials}.
	}
	\label{table:ModelRates}
\end{table}
\renewcommand{\arraystretch}{1}
\subsubsection{Physical inferences}
\label{sec:PhysicalInferences}
Physical inferences\,---\,based on minimization of expression~\eqref{eq:misfit}\,---\,are accurate in a neighbourhood of $\overline V_{\!\rightarrow}$ and $\overline P$\,, wherein the set of values for ${\rm C_{d}A}$\,, ${\rm C_{rr}}$ and $\lambda$ is estimated, since, as discussed in Section~\ref{sec:ModelInferences}, these values\,---\,in spite of their distinct physical interpretations\,---\,are related among each other by the process of optimization of the model.

In view of model~\eqref{eq:model}, power as a function of ground speed is cubic.
As stated in Proposition~\ref{prop:one-to-one}, below, and shown in Figure~\ref{fig:FigPowerSpeed}, for a steady ride along a flat or an uphill, the relationship between power and speed is one-to-one and results in a concave-up curve for~$V_{\!\rightarrow}>0$\,.
\begin{figure}
	\centering
	\includegraphics[scale=0.7]{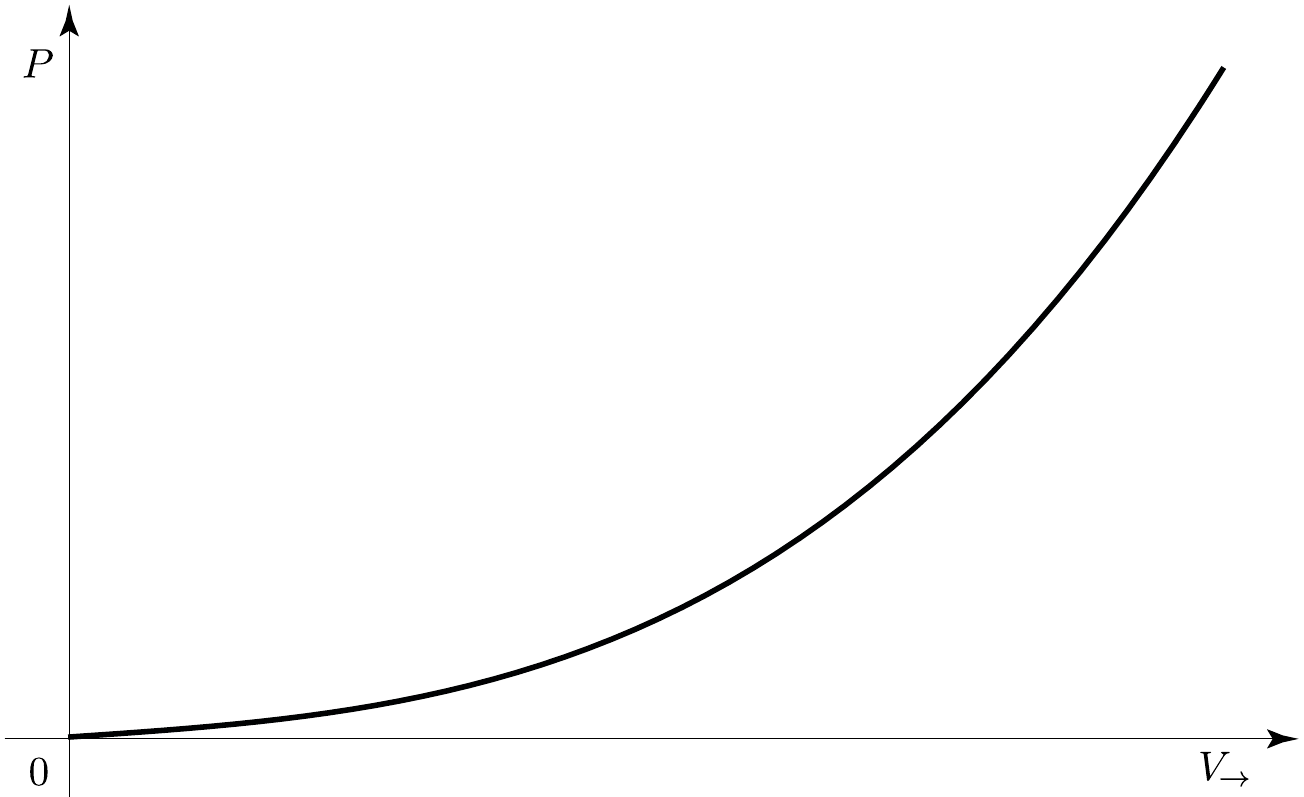}
	\caption{\small 
		Power as a function of speed}
	\label{fig:FigPowerSpeed}
\end{figure}

The values in Table~\ref{table:PhysicalRates} are the changes of ground speed due to a change in power, mass, slope and wind; in each case, the other quantities are kept constant.
These values allow us to answer such questions as what is the required increase of power to increase speed by 1 metre per second?
To answer this question, let us examine the first row.

Corresponding to their respective neighbourhoods and reciprocals from Table~\ref{table:PhysicalRates}, we have \mbox{$\partial_{V_{\!\rightarrow}}P=56.55\,{\rm W}\,{\rm m}^{-1}\,{\rm s}$} for the flat course and \mbox{$\partial_{V_{\!\rightarrow}}P=74.43\,{\rm W}\,{\rm m}^{-1}\,{\rm s}$} for the inclined course.
Thus, an increase of $V_{\!\rightarrow}$ by 1~metre per second requires an increase of $P$ of about 57~watts for the former and 74~watts for the latter.%
\footnote{$\partial P/\partial V$ can be also found by differentiating model~\eqref{eq:model} with respect to $V_{\!\rightarrow}$\,.
However, the implicit function theorem allows us to obtain relations between quantities, without explicitly expressing one in terms of the other.
Also, in accordance with the inverse function theorem, $\partial V/\partial P=1/(\partial P/\partial V)$\,, which is justified by the fact that, for model~\eqref{eq:model}, $\partial P/\partial V_{\!\rightarrow}\neq 0$\,, in the neighbourhood of interest, as required by the theorem.
However, expression~(\ref{eq:Thm}), which states the implicit function theorem, provides a convenience of examining the relations between the rates of change of any two quantities without invoking the inverse function theorem and requiring an explicit expression for either of them.}
However, this speed increase corresponds to just 9.5\% on the flat course, but 24\% on the inclined course.
Thus, remaining within a linear approximation, an 1\%-increase of speed requires about a 2.3\%-increase in power on the former course, but only 1.1\% on the latter.
This result provides a quantitative justification for a time-trial adage of pushing on the uphills and recovering on the flats, to diminish the overall time.

Since, as illustrated in Figure~\ref{fig:FigPowerSpeed}, the slope of the tangent line changes along the curve, the value of expression~(\ref{eq:Thm}) corresponds to a given neighbourhood of $(\overline V_{\!\rightarrow},\overline P)$ pairs.
Our interpretation is tantamount to comparing the slopes of two such curves\,---\,one corresponding  to the model of the flat course and the other of the inclined course\,---\,at two distinct locations, $(\overline V_{\!\rightarrow},\overline P)=(10.51,258.8)$ and $(\overline V_{\!\rightarrow},\overline P)=(4.138,286.6)$\,.
Even though the slope of the tangent line changes, it is positive for all values.
This means that the function is monotonically increasing, even though it is a third degree polynomial.
In other words, the relation of power and speed is a bijection, as illustrated in Figure~\ref{fig:FigPowerSpeed} and as discussed in Section~\ref{sec:one-to-one}.

Now, let us consider another question: remaining within a linear approximation, what percentage decrease in mass is required to result in a 1\%-increase in speed for both the flat and inclined courses?
To do so, we examine the second row of Table~\ref{table:PhysicalRates}, where, for the flat course, $\partial_{m}V_{\!\rightarrow}=-0.004368\,{\rm m}\,{\rm s}^{-1}\,{\rm kg}^{-1}$ and, for the inclined course, $\partial_{m}V_{\!\rightarrow}=-0.03309\,{\rm m}\,{\rm s}^{-1}\,{\rm kg}^{-1}$.
Using the corresponding differentials, the 1\%-increase in speed requires a decrease of mass of about 22\% on the flat, but only 1\% on the incline.
This is supportive evidence of an empirical insight into the importance of lightness for climbing; in contrast to flat courses, in the hills, even a small loss of weight results in a noticeable advantage.
Also, this result can be used to quantify the importance of the power-to-weight ratio, which plays an important role in climbing, but a lesser one on a flat.

Similar evaluations can be performed using the values of derivatives contained in the third and fourth rows of Table~\ref{table:PhysicalRates}.
In both cases, the sign is negative; hence, as expected, the increase of steepness or headwind results in a decrease of speed.
These rates of decrease, which are different for the flat and inclined courses, can be quantified in a manner analogous to the one presented in this section.
\renewcommand{\arraystretch}{1.1}
\begin{table}
	\centering
	\begin{tabular}{cccc}
	\toprule
	Partial derivative & Units & Flat course & Inclined course \\
	\bottomrule\\[-1em]
	$\partial_{P}V_{\!\rightarrow}$ & ${\rm m}\,{\rm s}^{-1}\,{\rm W}^{-1}$ &
	$0.01768\,\pm\,0.003697$ & $0.01344\,\pm\,0.02248$
	\\
	$\partial_{m}V_{\!\rightarrow}$ & ${\rm m}\,{\rm s}^{-1}\,{\rm kg}^{-1}$ &
	$-0.004368\,\pm\,0.009832$ & $-0.03309\,\pm\,0.007120$
	\\
	$\partial_{\theta}V_{\!\rightarrow}$ & ${\rm m}\,{\rm s}^{-1}$ &
	$-209.9\,\pm\,29.04$ & $-62.30\,\pm\,104.7$
	\\
	$\partial_{w_{\leftarrow}}V_{\!\rightarrow}$ & --- &
	$-0.6359\,\pm\,0.06939$ & $-0.07485\,\pm\,0.1254$
	\\
	\bottomrule
	\end{tabular}
	\caption{\small Physical rates of change following formula~(\ref{eq:Thm}) and values, rounded to four significant figures, in Table~\ref{table:Partials}.
	}
	\label{table:PhysicalRates}
\end{table}
\renewcommand{\arraystretch}{1}
\section{Model considerations}
\subsection{Relation between power and speed}
\label{sec:one-to-one}
\begin{proposition}
\label{prop:one-to-one}
According to model~\eqref{eq:model}, with $a=0$\,, $V_{\!\rightarrow}>w_{\leftarrow}$ and $0\leqslant\theta\leqslant\pi/2$\,, the relation between the measured power,~$P$\,, and the bicycle speed,~$V_{\!\rightarrow}>0$\,, is one-to-one.
Also, $P$ as a function of $V_{\!\rightarrow}$ is concave-up.
\end{proposition}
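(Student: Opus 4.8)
The plan is to fix $a=0$ in model~\eqref{eq:model}, so that $P$ becomes an explicit scalar function of $V_{\!\rightarrow}$, and then to read off both assertions from the sign of its first two derivatives. With $a=0$ the numerator depends on $V_{\!\rightarrow}$ only through the air-resistance term, giving the cubic
\begin{equation*}
	P(V_{\!\rightarrow})
	=
	\frac{1}{1-\lambda}
	\left[
		mg\left(\sin\theta+{\rm C_{rr}}\cos\theta\right)V_{\!\rightarrow}
		+
		\tfrac{1}{2}{\rm C_{d}A}\,\rho\,\left(V_{\!\rightarrow}+w_{\leftarrow}\right)^{2}V_{\!\rightarrow}
	\right].
\end{equation*}
Since a strictly monotonic function is injective and a continuous injection is a bijection onto its image, it suffices to prove $P'(V_{\!\rightarrow})>0$ for $V_{\!\rightarrow}>0$ to obtain the one-to-one claim, and $P''(V_{\!\rightarrow})>0$ for the concave-up claim.

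First I would differentiate. Note that $\mathrm{d}P/\mathrm{d}V_{\!\rightarrow}$ is exactly $-\partial_{V_{\!\rightarrow}}f$ evaluated at $a=0$, which is already recorded in Section~\ref{sec:ImplFuncThm}; thus
\begin{equation*}
	\frac{\mathrm{d}P}{\mathrm{d}V_{\!\rightarrow}}
	=
	\frac{mg\left(\sin\theta+{\rm C_{rr}}\cos\theta\right)}{1-\lambda}
	+
	\frac{{\rm C_{d}A}\,\rho\left(V_{\!\rightarrow}+w_{\leftarrow}\right)\left(3V_{\!\rightarrow}+w_{\leftarrow}\right)}{2\,(1-\lambda)}
	\,,
\end{equation*}
and a second differentiation gives ${\mathrm{d}^{2}P}/{\mathrm{d}V_{\!\rightarrow}^{2}}={\rm C_{d}A}\,\rho\,(3V_{\!\rightarrow}+2w_{\leftarrow})/(1-\lambda)$. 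Equivalently, one may invoke the inverse function theorem, as in the footnote of Section~\ref{sec:PhysicalInferences}, since $\partial P/\partial V_{\!\rightarrow}\neq0$.

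Next comes the term-by-term sign analysis, which is the heart of the argument. The drivetrain factor satisfies $1-\lambda>0$ by the mechanical restriction $\lambda<1$. The gravity-and-rolling term is strictly positive: $0\leqslant\theta\leqslant\pi/2$ forces $\sin\theta\geqslant0$ and $\cos\theta\geqslant0$, while $m,g>0$ and ${\rm C_{rr}}>0$, so $mg(\sin\theta+{\rm C_{rr}}\cos\theta)>0$. The remaining work is to control the wind-dependent factors, and this is where I expect the main obstacle: one must verify that $V_{\!\rightarrow}+w_{\leftarrow}$, $3V_{\!\rightarrow}+w_{\leftarrow}$ and $3V_{\!\rightarrow}+2w_{\leftarrow}$ are all positive for $V_{\!\rightarrow}>0$. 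In the headwind regime $w_{\leftarrow}\geqslant0$ the three factors are positive immediately from $V_{\!\rightarrow}>0$; the role of the hypothesis $V_{\!\rightarrow}>w_{\leftarrow}$ is to exclude configurations in which the relative air-flow speed reverses sign, and pinning down exactly how this hypothesis, under the paper's sign convention for $w_{\leftarrow}$, secures positivity of all three factors is the one genuinely delicate point of the proof.

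Finally I would assemble the conclusion: with $1-\lambda>0$, a strictly positive gravity-and-rolling contribution, and a nonnegative air contribution, $P'(V_{\!\rightarrow})>0$ on $V_{\!\rightarrow}>0$, so $P$ is strictly increasing and hence one-to-one; and since $\mathrm{d}^{2}P/\mathrm{d}V_{\!\rightarrow}^{2}>0$ there, $P$ is concave-up, completing both claims. Everything apart from the wind-factor positivity is routine computation and sign bookkeeping.
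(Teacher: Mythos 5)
Your approach is exactly the paper's: write $P$ as the explicit cubic in $V_{\!\rightarrow}$, differentiate twice, and do a term-by-term sign analysis, with the same factorization $(V_{\!\rightarrow}+w_{\leftarrow})(3V_{\!\rightarrow}+w_{\leftarrow})$ of the air term in $\partial P/\partial V_{\!\rightarrow}$ and the factor $3V_{\!\rightarrow}+2w_{\leftarrow}$ in $\partial^{2}P/\partial V_{\!\rightarrow}^{2}$. But you stop short at precisely the step that constitutes the substance of the paper's proof: the tailwind case. You flag it as ``the one genuinely delicate point'' and leave it unresolved; since everything else is, as you yourself note, routine bookkeeping, this is a genuine gap rather than an omitted detail.

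Here is what has to be said, and why it cannot be waved through. Under the paper's convention a tailwind means $w_{\leftarrow}<0$, so the hypothesis $V_{\!\rightarrow}>w_{\leftarrow}$, read literally, is vacuous for $V_{\!\rightarrow}>0$ and does \emph{not} secure positivity of your three factors: with $\theta=0$ and $w_{\leftarrow}=-2V_{\!\rightarrow}$, the air contribution to $\partial P/\partial V_{\!\rightarrow}$ equals $-\tfrac{1}{2}\rho\,{\rm C_{d}A}\,V_{\!\rightarrow}^{2}/(1-\lambda)$, which outweighs the rolling term $mg\,{\rm C_{rr}}/(1-\lambda)$ once $V_{\!\rightarrow}$ is moderately large (about $4\,{\rm m/s}$ for the paper's flat-course values), so $P$ is not even monotone there. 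The paper closes this by interpreting the hypothesis as ``the ground speed must be greater than the tailwind,'' i.e.\ $V_{\!\rightarrow}+w_{\leftarrow}>0$ (the speed exceeds the tailwind's magnitude $-w_{\leftarrow}$): it finds the roots $-w_{\leftarrow}$ and $-w_{\leftarrow}/3$ of $3V_{\!\rightarrow}^{2}+4w_{\leftarrow}V_{\!\rightarrow}+w_{\leftarrow}^{2}$ and keeps only the interval above the larger root. With that reading, your three factors are positive in one line each: $V_{\!\rightarrow}+w_{\leftarrow}>0$ by hypothesis, $3V_{\!\rightarrow}+w_{\leftarrow}=2V_{\!\rightarrow}+(V_{\!\rightarrow}+w_{\leftarrow})>0$, and $3V_{\!\rightarrow}+2w_{\leftarrow}=V_{\!\rightarrow}+2(V_{\!\rightarrow}+w_{\leftarrow})>0$. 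Adding that sentence, together with the sign-convention reading it rests on, completes your argument and makes it coincide with the paper's proof.
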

\begin{proof}
It suffices to show that $\partial P/\partial V_{\!\rightarrow}>0$ and $\partial^2P/\partial V_{\!\rightarrow}^2>0$\,, for $a=0$\,, $V_{\!\rightarrow}>w_{\leftarrow}$\,, $0\leqslant\theta\leqslant\pi/2$ and $V_{\!\rightarrow}\in(0,\infty)$\,, where
\begin{equation*}
	P=\frac{(\sin\theta+{\rm C_{rr}}\cos\theta)\,m\,g\,V_{\!\rightarrow}+\tfrac{1}{2}\,{\rm C_{d}A}\,\rho\,\left(V_{\!\rightarrow}^3+2\,w_{\leftarrow}\,V_{\!\rightarrow}^2+w_{\leftarrow}^2\,V_{\!\rightarrow}\right)}{1-\lambda}\,.
\end{equation*}
The first and second derivatives are
\begin{equation*}
	\dfrac{\partial P}{\partial V_{\!\rightarrow}}
	=\frac{(\sin\theta+{\rm C_{rr}}\cos\theta)\,m\,g+\tfrac{1}{2}\,{\rm C_{d}A}\,\rho\,\left(3V_{\!\rightarrow}^2+4\,w_{\leftarrow}\,V_{\!\rightarrow}+w_{\leftarrow}^2\right)}{1-\lambda}
	\quad{\rm and}\quad
	\dfrac{\partial^2P}{\partial V_{\!\rightarrow}^2}
	=\frac{{\rm C_{d}A}\,\rho\,\left(3\,V_{\!\rightarrow}+2\,w_{\leftarrow}\right)}{1-\lambda}\,.
\end{equation*}
Since, in physically meaningful situations, $\lambda<1$\,, a sufficient condition for a one-to-one relation is for both summands in the numerator of the first derivative,~$\partial P/\partial V_{\!\rightarrow}$\,, to be positive.
Since, $m>0$\,, $g>0$\,, $\rm C_{rr}>0$ and $\sin\theta\geqslant0$\,, $\cos\theta\geqslant0$\,, for $0\leqslant\theta\leqslant\pi/2$\,, the first summand is positive.
Since ${\rm C_{d}A}>0$\,, $\rho>0$\,, the second summand is positive if and only if
\begin{equation*}
3V_{\!\rightarrow}^2+4\,w_{\leftarrow}\,V_{\!\rightarrow}+w_{\leftarrow}^2> 0\,.
\end{equation*}
Considering a headwind\,---\,for which $w_{\leftarrow}$ is positive\,---\,we obtain $V_{\!\rightarrow}< -w_{\leftarrow}$ and $V_{\!\rightarrow}> -w_{\leftarrow}/3$\,.
Since both correspond to $V_{\!\rightarrow}<0$\,, there are outside the domain of interest.
Hence, for any headwind, the second summand is positive.

Considering a tailwind\,---\,for which $w_{\leftarrow}$ is negative\,---\,we obtain $V_{\!\rightarrow}< w_{\leftarrow}/3$ and $V_{\!\rightarrow}>w_{\leftarrow}$\,.
To include the entire domain of $V_{\!\rightarrow}>0$\,, we consider only the latter interval: the ground speed must be greater than the tailwind.

For convexity, we need $3\,V_{\!\rightarrow}+2\,w_{\leftarrow}>0$\,.
Considering a headwind, both summands are positive, which entails $\partial^2P/\partial V_{\!\rightarrow}^{\,2}>0$\,.
Considering a tailwind, the accepted condition, $V_{\!\rightarrow}>w_{\leftarrow}$\,, ensures that $3\,V_{\!\rightarrow}+2\,w_{\leftarrow}>0$\,, for $V_{\!\rightarrow}>0$\,.

Thus, under conditions stated in Proposition~\ref{prop:one-to-one}, the relation between power and speed is one-to-one and $P$ as a function of $V_{\!\rightarrow}$ is concave-up.
\end{proof}

This bijection means that\,---\,for steady rides along flats or uphill\,---\,the power generated by a cyclist and the ground speed of the bicycle-cyclist system are related by an invertible function, which is consistent with unique physical solutions obtained in Appendices~\ref{app:ConstraintWork} and \ref{app:ConstraintPower}.
Let us state two corollaries of Proposition~\ref{prop:one-to-one}.
\medskip
\begin{corollary}
Ceteris paribus, an increase of speed requires an increase of power, and an increase of power results in an increase of speed.
\end{corollary}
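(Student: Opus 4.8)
The plan is to derive the corollary directly from Proposition~\ref{prop:one-to-one}, which has already established that $P$ is a strictly increasing function of $V_{\!\rightarrow}$ under the stated conditions. The phrase \emph{ceteris paribus} signals that all other model parameters—$m$, $g$, $\theta$, $\rho$, $\mathrm{C_{d}A}$, $\mathrm{C_{rr}}$, $\lambda$, $w_{\leftarrow}$—as well as the assumptions $a=0$, $V_{\!\rightarrow}>w_{\leftarrow}$, and $0\leqslant\theta\leqslant\pi/2$ are held fixed, so that $P$ and $V_{\!\rightarrow}$ are related by the single-variable function of the proposition. The corollary is then essentially a restatement of strict monotonicity together with its consequence for the inverse.

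First I would recall that Proposition~\ref{prop:one-to-one} gives $\partial P/\partial V_{\!\rightarrow}>0$ on the domain $V_{\!\rightarrow}>0$, so $P$ is strictly increasing in $V_{\!\rightarrow}$. This immediately yields the second half of the corollary: an increase in speed produces a strictly larger value of $P$, i.e.\ an increase in power. For the first half—that an increase of power requires an increase of speed—I would invoke the one-to-one (bijective) nature of the relation, also established in the proposition. Since a strictly increasing continuous function on an interval is invertible with a strictly increasing inverse, the inverse map $V_{\!\rightarrow}(P)$ is itself strictly increasing; hence raising $P$ forces $V_{\!\rightarrow}$ to rise as well. Equivalently, one may appeal to the inverse function theorem, as the footnote in Section~\ref{sec:PhysicalInferences} does, noting that $\partial V_{\!\rightarrow}/\partial P = 1/(\partial P/\partial V_{\!\rightarrow})>0$ because the denominator is positive and nonzero in the neighbourhood of interest.

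I do not anticipate a genuine obstacle here, since all the analytic content lives in Proposition~\ref{prop:one-to-one}; the corollary is a two-line logical unpacking of strict monotonicity and invertibility. The only point requiring a modicum of care is making explicit that \emph{ceteris paribus} reduces the full multivariable map~\eqref{eq:f} to the single-variable function $P(V_{\!\rightarrow})$ of the proposition, so that the sign of the ordinary derivative governs the monotone relationship. Once that reduction is stated, both implications follow at once, and I would close by remarking that the positivity of $\partial P/\partial V_{\!\rightarrow}$ (equivalently its reciprocal) is exactly what guarantees the two directions of the ``requires/results in'' statement are genuinely two faces of a single bijective correspondence.
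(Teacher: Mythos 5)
Your proposal is correct and matches the paper's intent exactly: the paper states this corollary without a separate proof, treating it as an immediate consequence of Proposition~\ref{prop:one-to-one}, which is precisely your reduction to strict monotonicity ($\partial P/\partial V_{\!\rightarrow}>0$) for one direction and invertibility of the bijection (equivalently, positivity of $\partial V_{\!\rightarrow}/\partial P$) for the other. Your only addition is making the \emph{ceteris paribus} reduction explicit, which is a faithful unpacking rather than a different route.
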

Also, as illustrated in Figure~\ref{fig:FigPowerSpeed}, the higher the value of $V_{\!\rightarrow}$\,, the more power is required for the same increase of speed.
\medskip
\begin{corollary}
\label{corr:2}
Ceteris paribus, a constant power results in a constant speed, an increase of power results in an instantaneous increase of speed, a decrease of power results in an instantaneous decrease of speed.
\end{corollary}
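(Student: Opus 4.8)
The plan is to deduce all three assertions directly from Proposition~\ref{prop:one-to-one}, which already supplies the necessary analytic content. Under the stated hypotheses\,---\,\emph{ceteris paribus}, meaning $m$, $g$, $\theta$, $\rho$, $w_{\leftarrow}$, ${\rm C_{d}A}$, ${\rm C_{rr}}$ and $\lambda$ are all held fixed and $a=0$\,---\,the map $V_{\!\rightarrow}\mapsto P$ is a continuous, strictly increasing bijection of $(0,\infty)$, since the proposition establishes $\partial P/\partial V_{\!\rightarrow}>0$ there. First I would invoke the inverse function theorem\,---\,applicable precisely because $\partial P/\partial V_{\!\rightarrow}\neq0$ in the neighbourhood of interest\,---\,to conclude that the inverse $V_{\!\rightarrow}=V_{\!\rightarrow}(P)$ exists, is differentiable, and satisfies $\partial V_{\!\rightarrow}/\partial P=(\partial P/\partial V_{\!\rightarrow})^{-1}>0$. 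Hence $V_{\!\rightarrow}$ is itself a strictly increasing function of $P$.

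The three claims then read off immediately, with no further computation. Because $P\mapsto V_{\!\rightarrow}$ is single-valued, a fixed value of $P$ corresponds to exactly one value of $V_{\!\rightarrow}$, which gives the first assertion that constant power yields constant speed. Because $V_{\!\rightarrow}(P)$ is strictly increasing, $P_2>P_1$ forces $V_{\!\rightarrow}(P_2)>V_{\!\rightarrow}(P_1)$ and $P_2<P_1$ forces $V_{\!\rightarrow}(P_2)<V_{\!\rightarrow}(P_1)$, which are the second and third assertions. The corollary is thus a restatement of the strict monotonicity of the inverse map already guaranteed by the proposition.

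The only point warranting care is the adjective \emph{instantaneous}, which is interpretive rather than analytic. Setting $a=0$ removes the change-in-speed term from model~\eqref{eq:model}, so within this steady-state idealization $P$ and $V_{\!\rightarrow}$ are linked by a static algebraic relation carrying no temporal lag. I would therefore state explicitly that ``instantaneous'' reflects the absence of dynamics in the $a=0$ model, rather than a claim about the transient physical response, for which $a\neq0$ and the kinetic-energy term would govern the approach to a new equilibrium. I expect this clarification\,---\,not any mathematical step\,---\,to be the one place a reader might object, so it is the part I would phrase most carefully.
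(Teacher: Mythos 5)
Your proposal is correct and follows essentially the same route as the paper: the corollary is presented there as an immediate consequence of the strictly increasing bijection $V_{\!\rightarrow}\mapsto P$ established in Proposition~\ref{prop:one-to-one}, with the inverse function theorem (which the paper itself invokes, noting $\partial V_{\!\rightarrow}/\partial P = 1/(\partial P/\partial V_{\!\rightarrow})>0$ since $\partial P/\partial V_{\!\rightarrow}\neq 0$) supplying the strictly increasing inverse. Your explicit caveat that ``instantaneous'' is interpretive\,---\,reflecting the static $a=0$ relation rather than a transient dynamic response\,---\,is a reasonable clarification consistent with the paper's reading.
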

This corollary is illustrated by \citet[Figure~17]{BSSS}, for a velodrome.

To conclude this section, let us comment on the requirement for the ground speed to be greater than the tailwind.
If the ground speed is equal to the tailwind, there is no air resistance.
Since, for tailwinds, $w_{\leftarrow}$ is negative, it follows that the last summand in model~(\ref{eq:model}), which corresponds to the air resistance, is zero.
If the tailwind is greater than the ground speed, the wind does not oppose the movement but contributes to the propulsion.
To accommodate this effect within the model, it would be necessary to, at least, put the negative sign in front on the last summand.
However, ${\rm C_{d}A}$ refers to a frontal surface area; hence, the model requires the positive sign in front on the aforementioned summand.
\subsection{Fixed-wheel drivetrain}
\label{sub:Qualifier}
\begin{figure}
	\centering
	\includegraphics[scale=0.5]{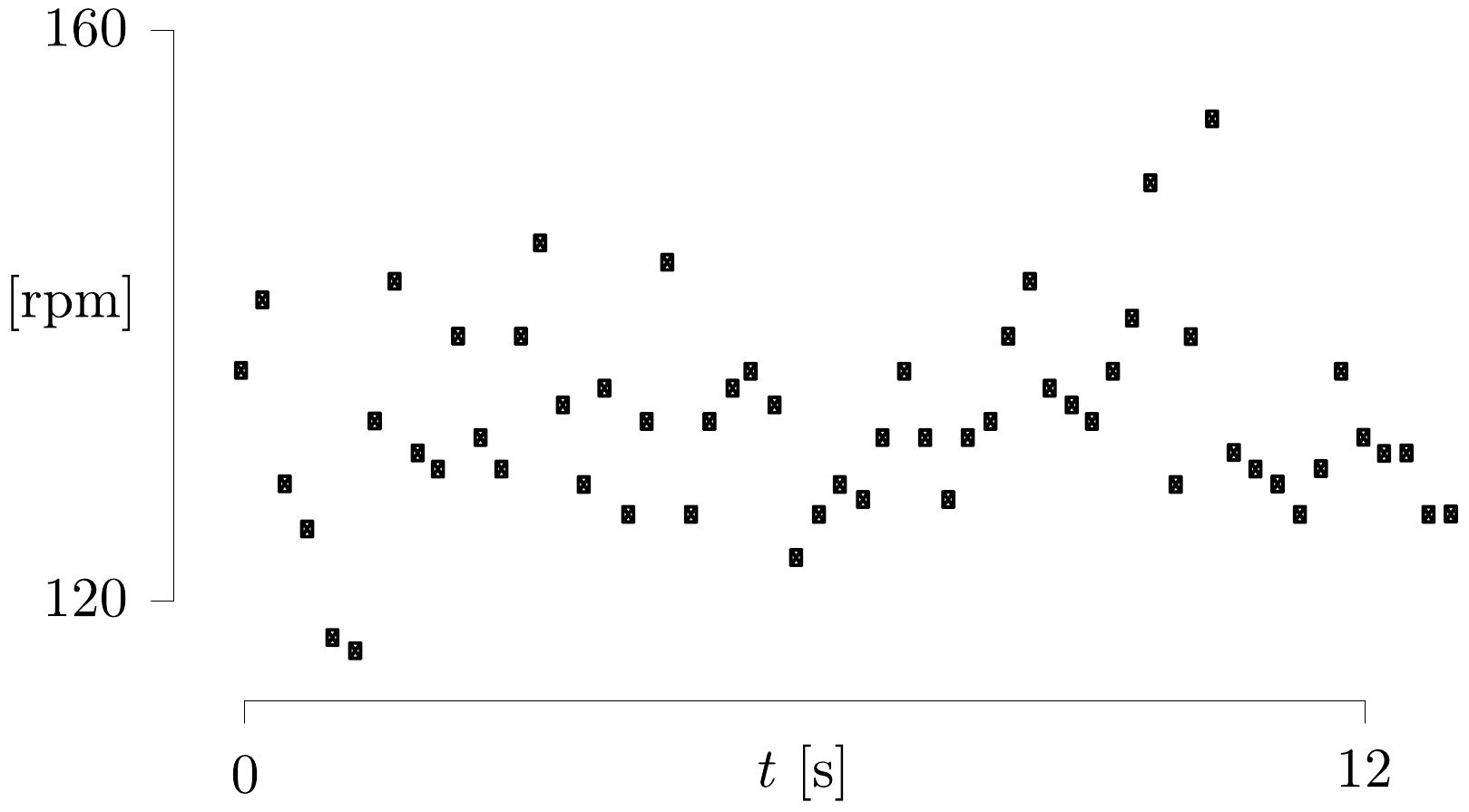}
	\caption{\small Measured cadence}
	\label{fig:FigCadence}
\end{figure}
Let us illustrate the difference between the power generated instantaneously by a cyclist and the measured power, for a fixed-wheel drivetrain.
The measured value of $P$ corresponds to the power generated instantaneously by a cyclist only if the pedal speed,~$v_{\circlearrowright}$\,, is an instantaneous consequence of the force applied to the pedals,~$f_{\circlearrowright}$\,, which is the case of a free-wheel drivetrain; if no force is applied to pedals, the pedals do not rotate, regardless of the bicycle speed.
Given a crank length, $v_{\circlearrowright}$ is obtained from measurements of cadence.
For a fixed-wheel drivetrain, for which there is a one-to-one relation between $v_{\circlearrowright}$ and the wheel speed, the momentum of the bicycle-cyclist system\,---\,which results in a pedal rotation, even without any force applied by a cyclist\,---\,might affect the value of~$v_{\circlearrowright}$\,.
\begin{figure}
	\centering
	\includegraphics[scale=0.5]{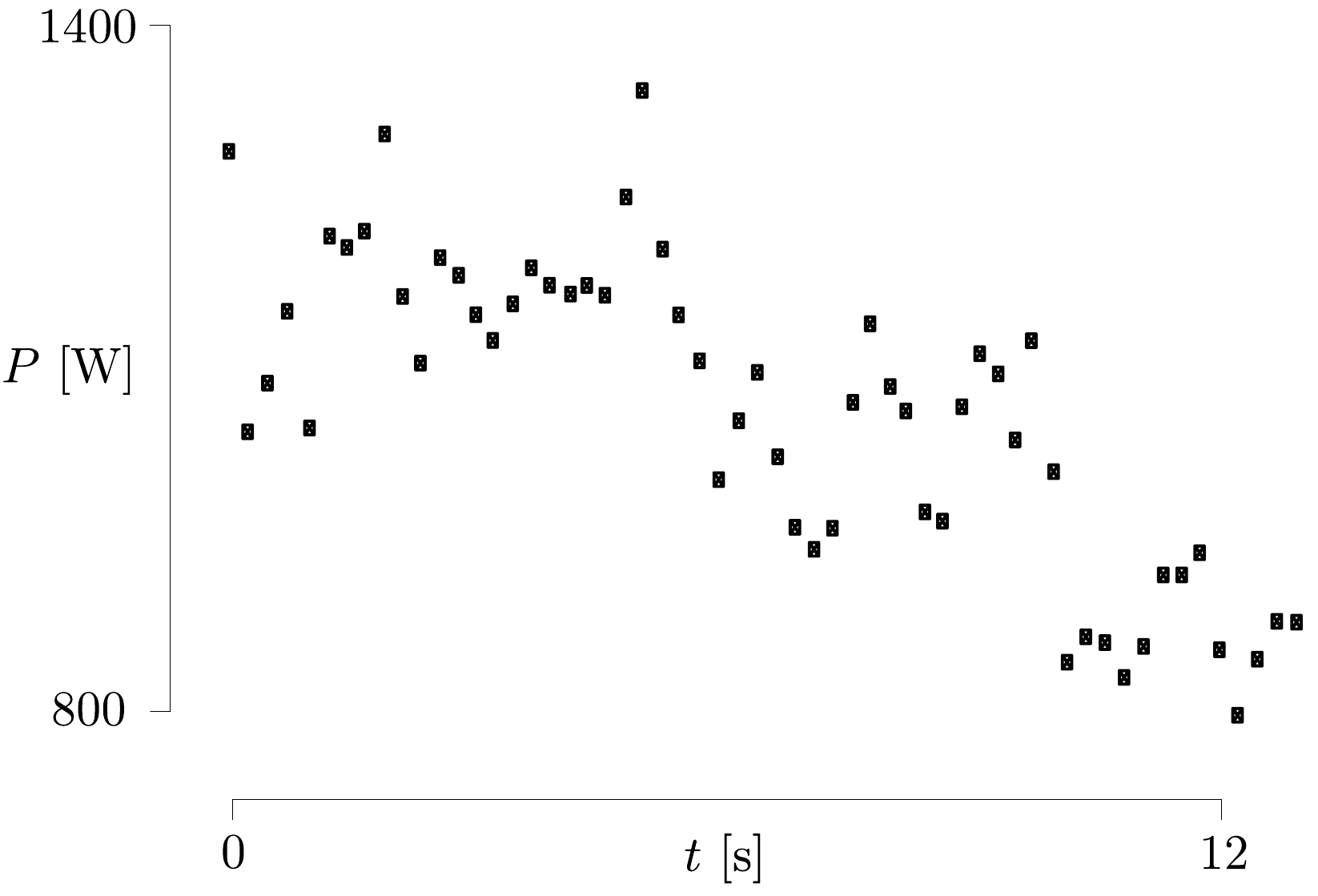}
	\caption{\small Measured power,~$P$}
	\label{fig:FigPowerFixed}
\end{figure}

To illustrate this issue, let us examine Figures~\ref{fig:FigCadence}, \ref{fig:FigPowerFixed} and \ref{fig:FigForce}, which correspond to the fixed-wheel measurements of, respectively, cadence, power and force from the second lap of a 1000-metre time trial on an indoor velodrome~(Mehdi Kordi, {\it pers.~comm.}, 2020).
Allowing for inaccuracies of measurements, we see that the first figure exhibits a steadiness of cadence,%
\footnote{Strictly speaking, unless a sufficient force is applied to the pedals, there is necessarily a decrease of cadence.
 Only hypothetically\,---\,with no internal or external resistance on the flats\,---\,the cadence would remain constant.}
 which entails the steadiness of~$v_{\circlearrowright}$ and the steadiness of the wheel speed.
The second figure exhibits a decrease of power, which\,---\,in view of the steadiness of~$v_{\circlearrowright}$ and in accordance with expression~\eqref{eq:formula}\,---\,is a consequence of the decrease of~$f_{\circlearrowright}$\,, shown in the third figure.
The behaviours observed in these figures are confirmed by the Augmented Dickey-Fuller test \citep{DickeyFuller}.
However, $F_{\!\leftarrow}$\,, in model~\eqref{eq:model}, need not decrease, since\,---\,in view of the steadiness of the wheel speed\,---\,$V_{\!\rightarrow}$ is approximately steady,%
\footnote{On a velodrome, $V_{\!\rightarrow}$ is equivalent to the wheel speed along the straights but not along the curves, due to the leaning of the bicycle-cyclist system.
Along the curves, and in general, $V_{\!\rightarrow}$ corresponds to the centre-of-mass speed~\citep{BSSS}.}
and so can be the power of the bicycle-cyclist system, for which we do not have direct measurements.
\begin{figure}
	\centering
	\includegraphics[scale=0.5]{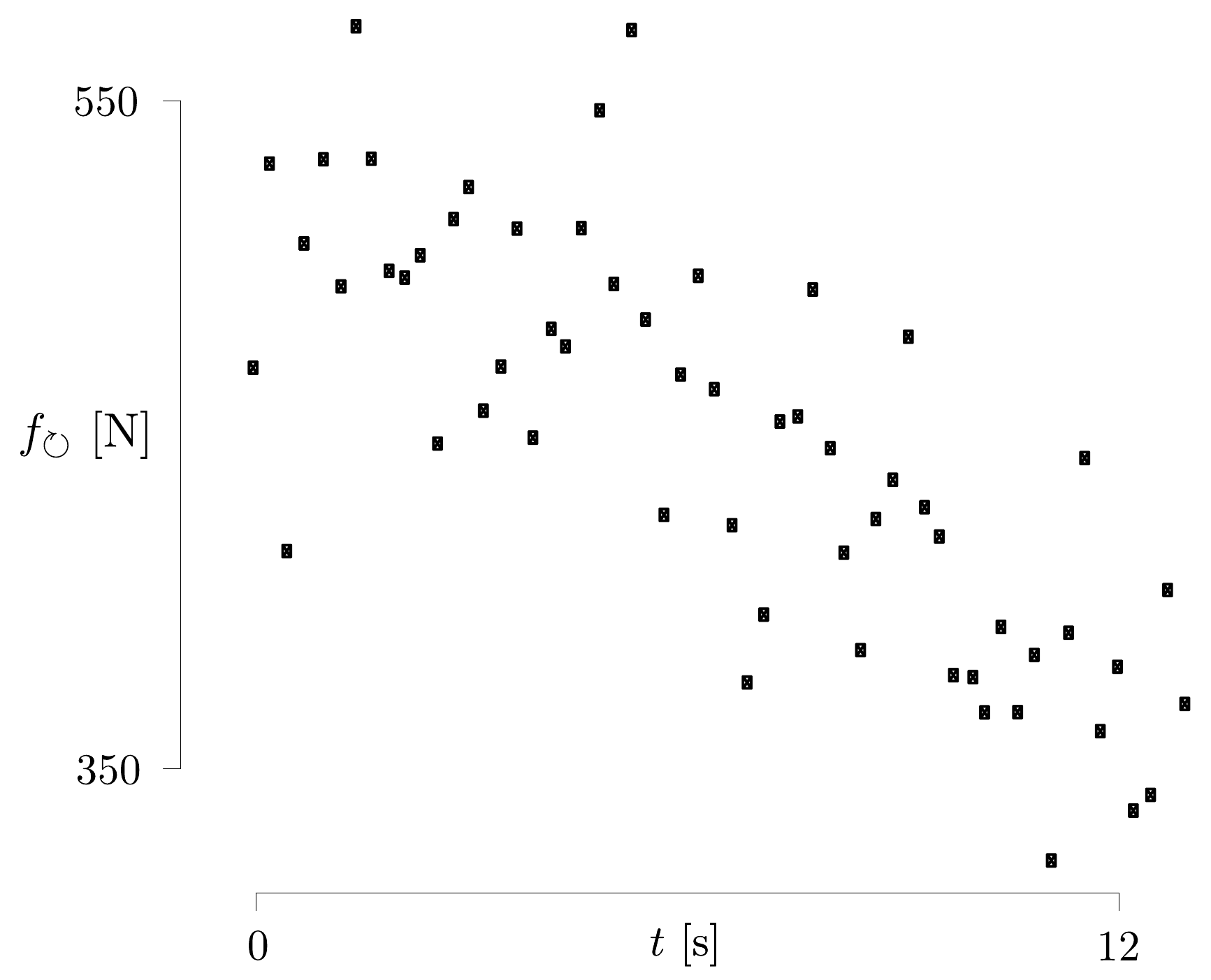}
	\caption{\small Measured force,~$f_{\circlearrowright}$}
	\label{fig:FigForce}
\end{figure}

In general, the measurements of power connected to the drivetrain and the power of the system itself are distinct from one another.
For instance, for a free-wheel drivetrain, on a downhill on which a cyclist does not pedal, $f_{\circlearrowright}\,v_{\circlearrowright}=0\neq F_{\!\leftarrow}\,V_{\!\rightarrow}>0$\,.%
\footnote{On a downhill, due to gravitation, the bicycle-cyclist system can accelerate even if the cyclist does not apply any force to the pedals.
On a flat, {\it ceteris paribus}, the system, albeit gradually, must slow down.}
Similarly, for a fixed-wheel drivetrain, the instantaneous measurements of power need not represent the power generated by a cyclist at these instances, since the rotation of the pedals might be partially due to the force exerted by the cyclist, at a given moment, and partially due to the momentum of the already moving bicycle-cyclist system.
The issue remains even if the sensors are not incorporated within the pedals but in the cranks, a bottom bracket or a rear hub.
In each case, the sensors are connected to the drivetrain.

Thus, for a fixed-wheel drivetrain, the power-meter measurements represent an instantaneous power generated by a cyclist if the power and cadence are in a dynamic equilibrium.
Such an equilibrium is reachable\,---\,following an initial acceleration\,---\,during a steady effort, as is the case of a $4000$\,-metre individual pursuit on a velodrome~\citep[e.g.,][Section~5]{BSSS} or the Hour Record, in contrast to accelerations followed by decelerations.
In the context of expression~\eqref{eq:formula}, a dynamic equilibrium means that changes of $v_{\circlearrowright}$ are immediate responses solely to changes in~$f_{\circlearrowright}$\,. 
\subsection{Air-density formula}
\label{sec:AirDens}
The air-density formula in model~\eqref{eq:model} is formulated from the ideal gas law with the assumption of hydrostatic equilibrium, which is a standard procedure outlined in works concerning atmospheric physics~(e.g., \citet[Chapter 2.2]{BohrenAlbrecht1998} or~\citet{Berberan-SantosEtAl2009}).
The experiment requires considering a thin slab of air of cross-sectional area~$A$ and thickness~$\Delta z$\,, whose total mass is the product of the mass density,~$\rho$\,, and volume,~$A\,\Delta z$\,, of the air contained therein.
The assumption of hydrostatic equilibrium requires the net sum of the forces acting on it amount to zero.
Thus, the so-called pressure-gradient force is directed upward\,---\,the upward pressure at the bottom of the slab, $p(z)$, is greater than the downward pressure at the top of the slab, $p(z+\Delta z)$\,---\,and is balanced by the downward weight of the slab, which results in
\begin{equation*}
	p(z) - p(z+\Delta z) - \rho\,A\,\Delta z\,g = 0\,,
\end{equation*}
where $g$ is the gravitational acceleration.
Dividing both sides by $A\,\Delta z$ and taking the limit of $\Delta z\to0$, we obtain the so-called hydrostatic equation,
\begin{equation}
	\label{eq:HySteq}
	\frac{{\rm d}p}{{\rm d}z} = -\rho\,g\,.
\end{equation}

Invoking the ideal gas law, $p\,V = n\,R\,T$\,, where $V$ is volume, $n$ is the number of moles, $R$ is the universal gas constant, $T$ is the temperature of the air in the slab, as well as considering the mass of the air,~$m$\,, with $n=m/M$\,, where $M$ is the molar mass of the air, and using $\rho = m/V$\,, we obtain
\begin{equation}
	\label{eq:rhoIGL}
	\rho = \frac{p\,M}{R\,T}\,.
\end{equation}
To solve equation~\eqref{eq:HySteq} using separation of variables, we substitute expression~\eqref{eq:rhoIGL} and divide both sizes by $p$ to obtain
\begin{equation*}
	\frac{{\rm d}p}{p} = -\frac{gM}{RT}\,{\rm d}z\,.
\end{equation*}
Setting the base of the slab at sea level, whereby $z_0=0\,{\rm m}$ and $p_0:=p(z_0)$, and integrating along the thickness of the slab, we have
\begin{equation*}
	\int\limits_{p_0}^{p(\Delta z)}\frac{{\rm d}p}{p} 
	= 
	-\int\limits_{0}^{\Delta z}\frac{gM}{RT}\,{\rm d}z\,,
\end{equation*}
whose solution is
\begin{equation*}
	\log p(\Delta z) - \log p_0 = -\frac{gM}{RT}\,\Delta z\,.
\end{equation*}
Taking the exponential of both sides and solving for $p(\Delta z)$, we have
\begin{equation*}
 	p(\Delta z) =  p_0\,\exp\left(-\frac{gM}{RT}\,\Delta z\right).
\end{equation*}
which is the barometric formula.
Defining altitude as $h:=\Delta z$ and substituting the barometric formula in expression~\eqref{eq:rhoIGL}, we have
\begin{align}
	\label{eq:rhoIGL}
	\rho = \frac{pM}{RT} = \frac{M}{RT}\left(p_0\exp\left(-\frac{gM}{RT}\,h\right)\right).
\end{align}
Upon substituting standard values~\citep[Table~1]{ICAO1993},
\begin{equation*}
	p_0 = 101325\,{\rm Pa},\quad
	g = 9.80665\,\frac{\rm m}{{\rm s}^2},\quad
	M = 0.02895442\,\frac{\rm kg}{\rm mol},\quad
	R = 8.3132\,\frac{\rm J}{{\rm mol}\,{\rm K}},\quad
	T = 288.15\,{\rm K} = 15^\circ{\rm C},
\end{equation*}
expression~\eqref{eq:rhoIGL} becomes
\begin{equation*}
	\rho(h) 
	= 
	\left(1.225\,\frac{\rm kg}{{\rm m}^3}\right)
	\exp\left(-\frac{0.0001186}{\rm m}\,h\right)
	,
\end{equation*}
which is air-density formula~\eqref{eq:DenAlt} used in model~\eqref{eq:model}.
\subsection{Air-resistance coefficient}
\label{sec:AirResCoeff}
To formulate the air-resistance force in model~\eqref{eq:model}, we assume that it is proportional to the frontal area,~$A$\,, and to the pressure,~$p$\,, exerted by air on this area,~$F_a\propto p\,A$\,, where $p=\tfrac{1}{2}\rho\,V^2$ has a form of kinetic energy and $V=V_{\!\rightarrow}+w_\leftarrow$ is the relative speed of a cyclist with respect to the air; $p$ is the energy density per unit volume.
We write this proportionality as
\begin{equation*}
F_a=\tfrac{1}{2}\,{\rm C_d}\,A\,\rho\,V^2\,,
\end{equation*}
where $\rm C_d$ is a proportionality constant, which is referred to as the drag coefficient.

A more involved justification for the form of the air-resistance force in model~\eqref{eq:model} is based on dimensional analysis \citep[e.g.,][Chapter~3]{Birkhoff}.
We consider the air-resistance force, which is a dependent variable, as an argument of a function, together with the independent variables, to write
\begin{equation*}
f(F_a,V,\rho,A,\nu)=0\,; 	
\end{equation*}
herein, $\nu$ is the viscosity coefficient.
Since this function is zero in any system of units, it is possible to express it only in terms of dimensionless groups.

According to the Buckingham theorem \citep[e.g.,][Chapter~3, Section~4]{Birkhoff}\,---\,since there are five variables and three physical dimensions, namely, mass, time and length\,---\,we can express the arguments of~$f$ in terms of two dimensionless groups.
There are many possibilities of such groups, all of which lead to equivalent results.
A common choice for the two groups is
\begin{equation*}
	\frac{F_a}{\tfrac{1}{2}\,\rho\,A\,V^2}\,,
\end{equation*}
which is referred to as the drag coefficient, and
\begin{equation*}
\frac{V\,\sqrt{A}}{\nu }\,,
\end{equation*}
which is referred to as the Reynolds number.
Thus, treating physical dimensions as algebraic objects, we can reduce a function of five variables into a function of two variables,
\begin{equation*}
g\left(\frac{F_a}{\tfrac{1}{2}\,\rho\,A\,V^2}\,,\,\frac{V\,\sqrt{A}}{\nu }\right)\,=\,0\,,
\end{equation*}
which we write as
\begin{equation*}
\frac{F_a}{\tfrac{1}{2}\,\rho\,A\,V^2}=h\left(\frac{V\,\sqrt{A}}{\nu }\right)\,,
\end{equation*}
where the only unknown is~$F_a$\,, and where $h$ is a function of the Reynolds number.
Denoting the right-hand side by $\rm C_d$\,, we obtain
\begin{equation*}
F_a=\tfrac{1}{2}{\rm C_d}\,A\,\rho\,V^2\,,
\end{equation*}
as expected.
In view of this derivation, $\rm C_d$ is not a constant; it is a function of the Reynolds number.
In our study, however\,---\,within a limited range of speeds\,---\,$\rm C_d$ is treated as a constant.
Furthermore, since $A$ is difficult to estimate, we include it within this constant, and consider their product,~$\rm C_dA$\,.

\subsection{Rotation effects: air resistance}
\label{sec:AirResRot}
To include the effect of air resistance of rotating wheels in model~\eqref{eq:model}, another summand would need to be introduced in the numerator,
\begin{equation*}
\tfrac{1}{2}{\rm C_w}\pi r^2\rho\,(V_{\!\rightarrow}+w_{\leftarrow})^2\,,	
\end{equation*}
where $r$ is the wheel radius.
Such a summand is formulated by invoking dimensional analysis in a manner analogous to the one presented in Section~\ref{sec:AirResCoeff}.

To combine rotational air resistance with the translational one, we use $v=\omega\,r$\,, where $v$ is the circumferential speed and $\omega$ is the angular speed, and the fact that\,---\,as discussed in Section~\ref{sec:RotEff} and illustrated in Figure~\ref{fig:FigNoSlip}\,---\,the circumferential speed, under the assumption of rolling without slipping, is the same as the ground speed of the bicycle,~$V_{\!\rightarrow}$\,.

Considering two standard wheels, we write
	\begin{equation}
		\label{eq:model2}
		P
		=
		\frac{
			mg\sin\theta 
			+
			(m+2\,m_{\rm w})\,a 
			+
			{\rm C_{rr}}m g\cos\theta 
			+
			\tfrac{1}{2}\,\rho
			\,(
				2\,{\rm C_w}\overbrace{\pi r^2}^{A_\circ}
				+
				{\rm C_{d}A_f}
			)
			\left(V_{\!\rightarrow} + w_{\leftarrow}\right)^2
		}{
			1-\lambda
		}
		V_{\!\rightarrow}
		\,;	
	\end{equation}
herein, in contrast to model~\eqref{eq:model} and as discussed in Section~\ref{sec:RotEff}, the change of speed, expressed by the second summand, contains effects of the moment of inertia due to rolling wheels.
The air resistance, expressed by the fourth summand, distinguishes between the air resistance due to translation of a bicycle from the air resistance due to its rolling wheels; $\rm A_f$ is the entire frontal area and $\rm A_\circ$ is the wheel side area.%
\footnote{Using model~\eqref{eq:model2} and the implicit function theorem, ${\partial\,{\rm C_dA_f}/\partial\,{\rm C_wA_\circ}=-2}$\,, which is indicative of the behaviour of a model; there is no physical relation between $\rm C_dA_f$ and $\rm C_wA_\circ$\,.
We expect, $\rm C_wA_o \ll C_dA_f$\,; however, the optimization programs treat them as two adjustable parameters of equal importance.}
An examination of the effect of two different wheels requires the introduction of two coefficients, one for each wheel.

In this study, the quality of available information renders the extraction of values for the resistance coefficients difficult.
Even though the data obtained from the power meter exhibit high accuracy, the data based on the GPS measurements introduce the uncertainty that renders an accurate extraction of even three parameters a numerical challenge.
Extraction of four or five parameters requires more accurate data.

In the meantime, however, we can consider forward models to gain an insight into the effect of a disk wheel.
Following model~\eqref{eq:model2}\,---\,for a steady ride,~$a=0$\,, under windless conditions,~$w_{\leftarrow}=0$\,, on a flat course,~$\theta=0$\,---\,we write the required powers as
\begin{equation*}
	P_{\rm n}=\dfrac{{\rm C_{rr}}mg+\tfrac{1}{2}\,\rho\left(2\,{\rm C_{w_n}A_w}+{\rm C_{d}A_f}\,\right)V_{\!\rightarrow}{}^2}{1-\lambda}V_{\!\rightarrow}	
	\quad{\rm and}\quad
	P_{\rm d}=\dfrac{{\rm C_{rr}}mg+\tfrac{1}{2}\,\rho\left(({\rm C_{w_n}}+{\rm C_{w_d}}){\rm A_w}+{\rm C_{d}A_f}\,\right)V_{\!\rightarrow}{}^2}{1-\lambda}V_{\!\rightarrow}\,,	
	\end{equation*}
where we distinguish between the drag coefficients of a normal wheel and a disk wheel.

The difference in required power is
\begin{equation*}
	\Delta P
	=
	\dfrac{{\rm C_{w_n}}-{\rm C_{w_d}}}{2\left(1-\lambda\right)}\,{\rm A_w}\,\rho\,{V_{\!\rightarrow}}^{3}\,.
\end{equation*}
Letting \mbox{${\rm C_{w_n}}\approx 0.05$} and \mbox{${\rm C_{w_d}}\approx 0.035$}~\citep{GreenwellEtAl1995} means that, for a standard wheel, \mbox{${\rm C_wA_\circ}\approx 0.015\,{\rm m}^2$}\,, and for a disk wheel,~\mbox{${\rm C_wA_\circ}\approx 0.01\,{\rm m}^2$}\,.
Both values are significantly smaller than $\rm C_dA_f$\,, as expected.
Letting $r=0.31\,{\rm m}$\,, $\rho=1.204\,{\rm kg/m}^3$\,, $\lambda=0.03574$\,, we obtain $\Delta P\approx0.0028\,V_{\!\rightarrow}{}^3$\,.
For $\overline V_{\!\rightarrow}=10.51\,{\rm m/s}$\,, we have $\Delta P\approx 3.3\,{\rm W}$\,.
Thus, for the present study, in the neighbourhood of $\overline P=258.8\,{\rm W}$\,, the replacement of a regular wheel by a disk wheel results in the decrease of required power of $\Delta P/\overline P\approx 1.3\%$\,, to maintain the same speed.
\subsection{Rotation effects: moment of inertia}
\label{sec:RotEff}
To include the effect of rotation upon change of speed in model~\eqref{eq:model}, we would need to consider the moment of inertia, which is $m\,r^2$\,, for a thin circular loop, and $m\,r^2/2$\,, for a solid disk, where $r$ stands for their radii.
	Relating the angular change in speed to the circumferential one\,---\,by a temporal derivative of $v=\omega r$\,, where $v$ is the circumferential speed and $\omega$ is the angular speed\,---\,the magnitudes of the corresponding rotational force are $Fr=mra$ and  $Fr=mra/2$\,, respectively; hence, $F=ma$ and $F=ma/2$ are the corresponding linear forces.
	
	In the preceding paragraph, $v=\omega r$ is the circumferential speed.
	To show that it is equal to the ground speed of the bicycle,~$V_{\!\rightarrow}$\,, let us consider the point of contact of the wheel and the road.
	The ground speed of that point is the sum of the circumferential speed of the wheel, at that point, and the speed of the bicycle.
	Since\,---\,under assumption of no slipping\,---\,the ground speed of that point is zero and the other two speeds refer to velocities in the opposite directions, we have $v=V_{\!\rightarrow}$\,.
	Thus, as illustrated in Figure~\ref{fig:FigNoSlip}, the circumferential speed of the wheel is the same as the ground speed of the bicycle; the same is true for the change of speed.
\begin{figure}
\centering
\includegraphics[scale=0.7]{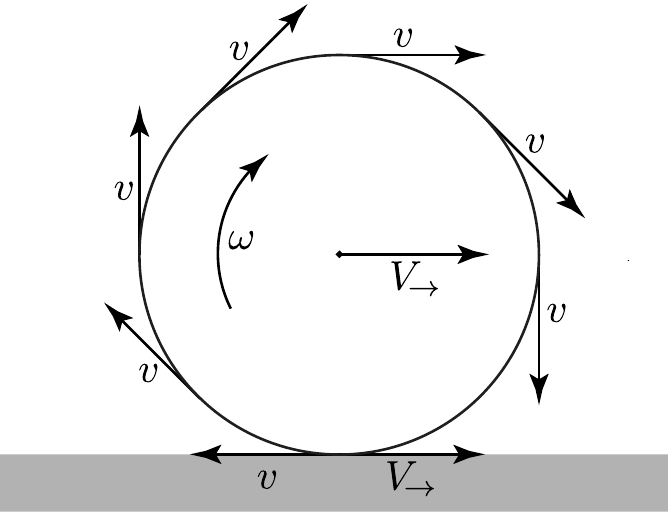} 
\caption{\small Rolling without slipping: angular speed,~$\omega$\,, circumferential speed,~$v=\omega\,r$\,, where $r$ is radius, and bicycle speed,~$V_{\!\rightarrow}=v$}
\label{fig:FigNoSlip}
\end{figure}
	
To consider a bicycle with one standard wheel and one disk wheel, we denote the mass of the former by~$m_{\rm w}$\,, and the mass of the latter by~$m_{\rm d}$\,.
Thus, the second summand in the numerator of model~\eqref{eq:model}, which is a linear force, becomes $(m+m_{\rm w}+m_{\rm d}/2)\,a$\,.
	
Note that $m$ contains both $m_{\rm w}$ and $m_{\rm d}$ to account for the translational and rotational effects; the former depends on the total mass and the latter on the mass of the wheels only.
For a standard wheel, $m_{\rm w}\approx 0.8\,{\rm kg}$\,, and for a disk wheel,~$m_{\rm d}\approx 1.3\,{\rm kg}$\,; for both~$r=0.31\,{\rm m}$\,.

\section{Discussion}
One of the main purposes of this article is to estimate the effects of air, rolling and drivetrain resistance.
To estimate the corresponding parameters, we restrict the study to ranges of speeds that are consistent with time trialing on either a flat course or a course with a steady inclination.
In either case, a rider pedals without interruptions and in the same position.
Also, there are no tight corners that would require braking. 
These restrictions allow us\,---\,for either case\,---\,to treat $\rm C_{d}A$\,, $\rm C_{rr}$ and $\lambda$ as constants, even though, in general, they are functions of speed.

The force of the air resistance, which is expressed by the fourth summand of model~\eqref{eq:model}, is proportional to the square of speed and, hence\,---\,for flat courses\,---\,it becomes dominant.
For climbs, in contrast, the expression is dominated by the first summand, which accounts for the force against gravity.
As discussed in Section~\ref{sec:RatesOfChange}, the effect of the air resistance can be quantified by relations between $\rm C_{d}A$ and $V_{\!\rightarrow}$\,.
The effect of gravity can be quantified by relations among $m$\,, $\theta$ and $V_{\!\rightarrow}$\,.
For instance, for any cyclist\,---\,with a given power output\,---\,the dependence of speed on mass is greater on a climb than on a flat course, and this dependence can be quantified, using $\partial_mV_{\!\rightarrow}$ and the implicit function theorem, discussed in Section~\ref{sec:PhysicalInferences}.

In estimating the effects of the air, rolling and drivetrain resistance, we recognize that the right-hand side of model~\eqref{eq:model}, which is a forward problem, invokes $\rm C_{d}A$\,, $\rm C_{rr}$ and $\lambda$ with their independent physical meanings.
The misfit minimization of expression~\eqref{eq:misfit}, however\,---\,for which the left-hand side is the measured value and the right-hand side is the retrodiction of a model\,---\,treats $\rm C_{d}A$\,, $\rm C_{rr}$ and $\lambda$ as adjustable parameters.
Relations between the rates of change of any two quantities, in expression~\eqref{eq:misfit}\,---\,resulting from the implicit function theorem\,---\,are insightful for interpreting the behaviour of the model.
For instance, $\rm C_{d}A$ and $\rm C_{rr}$\,, which\,---\,as physical quantities\,---\,are independent of one another, become inversely proportional to one another, as discussed in Section~\ref{sec:ModelInferences}.

Maintaining the physical meaning of $\rm C_{d}A$\,, $\rm C_{rr}$ and $\lambda$ is a challenge.
To extract~$\lambda$\,, one might consider the placement of the power meter and its effect on drivetrain loss.
As indicated by \citet{Chung2012}, a power meter in the rear hub provides readings absent of drivetrain loss.
This is not the case for power measured elsewhere.
To accommodate this issue, \citet{MartinEtAl1998} suggests a fixed wattage or percentage loss.

Setting, {\it a priori}\,, $\lambda$ to be a fixed value increases the stability of extracting the remaining two coefficients in model~\eqref{eq:model}.
Otherwise, a division of the entire expression is a scaling that contributes to nonuniqueness.
Even so, ``prying apart'' \mbox{$\rm C_dA$ and~$\rm C_{rr}$} remains a challenge, for which \citeauthor{Chung2012} suggests testing the same tires and tubes on the same roads on the same day at the same pressure so that $\rm C_{rr}$ is a constant, and then concentrate on estimating changes in $\rm C_dA$\,.

Another purpose of this article is to use the results derived herein as sources of information for optimizing the performance in a time trial under a variety of conditions, such as the strategy of the distribution of effort over the hilly and flat portions or headwind and tailwind sections.
For instance, examining $\partial_{w_{\leftarrow}}V_{\!\rightarrow}$\,, for a flat course, we could quantify a time-trial adage of pushing against the headwind and recovering with the tailwind, to diminish the overall time, under a constraint of cyclist's capacity; such a conclusion is illustrated in Appendix~\ref{app:LagrangeMultipliers}.
A further insight into this statement is provided by the following example.

Let us consider a five-kilometre section against the headwind,~$w_{\leftarrow}=5\,{\rm m/s}$\,, and, following a turnaround, the same five-kilometre section with the tailwind, $w_{\leftarrow}=-5\,{\rm m/s}$\,.
If we keep a constant power,~$P=258.8\,{\rm W}$\,, and use model~\eqref{eq:model} to find the corresponding speed, we achieve the total time of $946$~seconds, for ten kilometres, with the upwind speed of $V_{\!\rightarrow}=8.273\,{\rm m/s}$ and the downwind speed of $V_{\!\rightarrow}=14.63\,{\rm m/s}$\,.
If we maintain the same average power, over ten kilometres, but increase the power on the upwind section by $10\%$ and decrease the power on the downwind section by $10\%$\,, we reduce the total time by about $16$~seconds, with the upwind speed of $V_{\!\rightarrow}=8.647\,{\rm m/s}$ and the downwind speed of $V_{\!\rightarrow}=13.78\,{\rm m/s}$\,.
For reliable results\,---\,in view of Figure~\ref{fig:FigPowerSpeed} and the linear approximation within a neighbourhood of the average speed for which the flat-course model is established\,---\,one should not consider excessive increases or decreases of speed or power.
To conclude this example, let us consider the case of keeping a constant speed,~$V_{\!\rightarrow}=11.43\,{\rm m/s}$\,, which is the average for the latter scenario, for ten kilometres.
Such a strategy requires the power for  the upwind section to be $P=531.6\,{\rm W}\gg 258.8\,{\rm W}$\,.
Thus, even though we should push harder against the wind than with the wind, we should not try to keep the same speed for both the upwind and downwind sections.
This conclusion is consistent with the partial-derivative values of Table~\ref{table:PhysicalRates}.

This conclusion is\,---\,only in part\,---\,consistent with a ``Rule of Thumb'' of \citet{Anton2013}, which amounts to expending some extra energy when riding against the wind and conserving some energy when riding with the wind.
A quantification of this, and another, rule of thumb of \citet{Anton2013} is presented in Appendix~\ref{app:LagrangeMultipliers}, where we question their generality.

\section{Summary}
In this article, we present a mathematical model that accounts for the power required to overcome the forces opposing the translational motion of a bicycle-cyclist system.
The model permits the estimation of physical parameters, such as air, rolling and drivetrain resistances.
Under various conditions, there are different relations between the rates of change of the model parameters in question. As a consequence of the implicit function theorem, relations between the rates of change of all quantities that are included in a model are explicitly stated, and each relation can be evaluated for given conditions.

Furthermore, the derived expressions allow us to interpret the obtained measurements in a quantitative manner, since the values of these expressions entail concrete issues to be addressed for a given bicycle course.
The reliability of information\,---\,which depends on the accuracy of measurements and the empirical adequacy of a model\,---\,is quantified by a misfit and by standard deviations of model parameters.
\section*{Acknowledgements}
We wish to acknowledge Len Bos, Yves Rogister and Rapha\"el Slawinski, for fruitful discussions, Mehdi Kordi for providing the measurements used in Section~\ref{sub:Qualifier}, David Dalton, for his scientific editing and proofreading, Elena Patarini, for her graphic support, and Roberto Lauciello, for his artistic contribution of famous mathematicians riding modern bicycles.
Furthermore, we wish to acknowledge Favero Electronics, including their Project Manager, Francesco Sirio Basilico, and R\&D engineer, Renzo Pozzobon, for inspiring this study by their technological advances and for supporting this work by insightful discussions and providing us with their latest model of Assioma Duo power meters.
Map data copyrighted OpenStreetMap contributors and available from \href{https://www.openstreetmap.org}{\tt https://www.openstreetmap.org}.
\bibliographystyle{apa}
\bibliography{DSSbici.bib}
\begin{appendix}
\section{On effects of averaging pedal speed per revolution for power calculations}
\label{app:AvgPedSpeed}
\setcounter{equation}{0}
\setcounter{figure}{0}
\renewcommand{\theequation}{\Alph{section}.\arabic{equation}}
\renewcommand{\thefigure}{\Alph{section}\arabic{figure}}
\subsection{Preliminary remarks}
Since, in expression~\eqref{eq:formula}, $v_{\circlearrowright}$ is proportional to the cadence, it is common to simplify circumferential speed measurements by considering only the cadence.
This means that\,---\,instead of measuring the speed instantaneously along the revolution\,---\,the measurement is performed only once per revolution and the resulting average is used in subsequent calculations.
In this appendix, we examine the effects of such a simplification.

Referring to an expression equivalent to expression~\eqref{eq:formula}\,---\,that invokes torque and angular velocity instead of the force and circumferential speed\,---\,\citet{Favero2018} state that
\begin{quote}
	[t]he torque/force value is usually measured many times during each rotation, while the angular velocity variation is commonly neglected, considering only its average value for each revolution.
	$[\,\ldots\,]$
	Favero Electronics, to ensure the maximum accuracy of its power meters in all pedaling conditions, decided to research to what extent the variation of angular velocity during a rotation affects the power calculation.	
\end{quote}
To examine the effect of including speed variation during a revolution, let us consider the following formulation to gain  analytical insights into the empirical results obtained by~\citet{Favero2018}.
\subsection{Formulation}
Consider a pedal whose revolution takes one second; hence, its circumferential speed is
\begin{equation}
	\label{eq:speed}
	v_{\circlearrowright}(\theta)
	=
	v_{0}\left(1+a\cos(2\theta)\right)
	\,,\qquad
	\theta\in(0,2\pi]
	\,,
\end{equation}
where $v_{0}=2\pi r/1$\,, $r$ is the crank length and $\theta$ is the angle.
Expression~(\ref{eq:speed}) is illustrated in Figure~\ref{fig:FigSpeed}.
\begin{figure}
	\centering
	\includegraphics[scale=0.5]{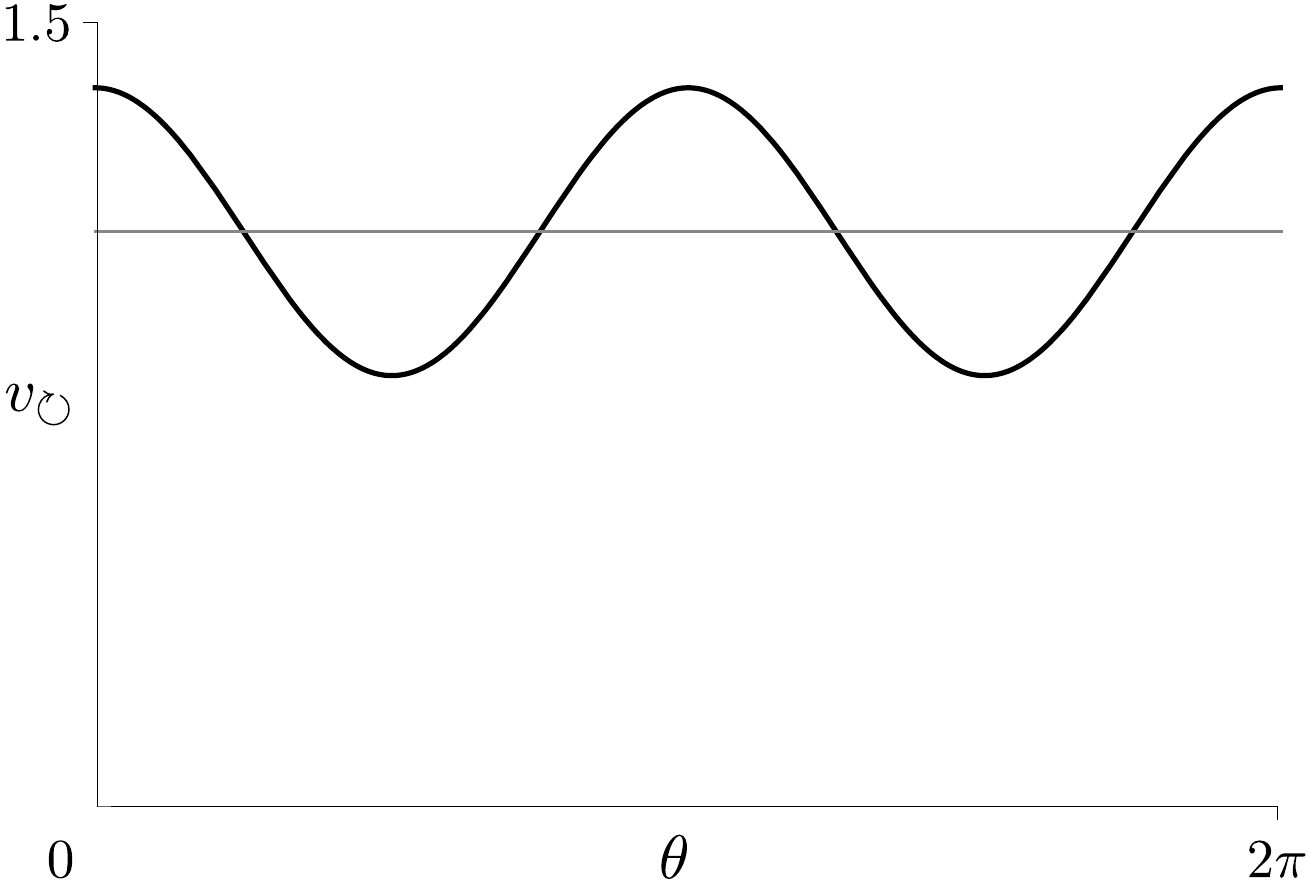}
	\caption{\small
		Circumferential speeds corresponding to expression~(\ref{eq:speed}): $a=0.25$\,, $r=0.175\,{\rm m}$ and $a=0$\,, $r=0.175\,{\rm m}$\,;  the former shown in black and the latter in grey
	}
	\label{fig:FigSpeed}
\end{figure}

Assume the magnitude of the tangential component of force applied to both pedals during this revolution to be
\begin{equation}
	\label{eq:force}
	f_{\circlearrowright}(\theta)
	=
	f_{0}\left(1+b\cos(2(\theta+c))\right)
	\,,\qquad
	\theta\in(0,2\pi]\,,	
\end{equation}
where $f_{0}$ is a constant and $c$ is an angular shift between $v_{\circlearrowright}$ and $f_{\circlearrowright}$\,, which is a constant whose units are radians.
Expression~(\ref{eq:force}) is illustrated in Figure~\ref{fig:FigFaveroForce}.
\begin{figure}
	\centering
	\includegraphics[scale=0.5]{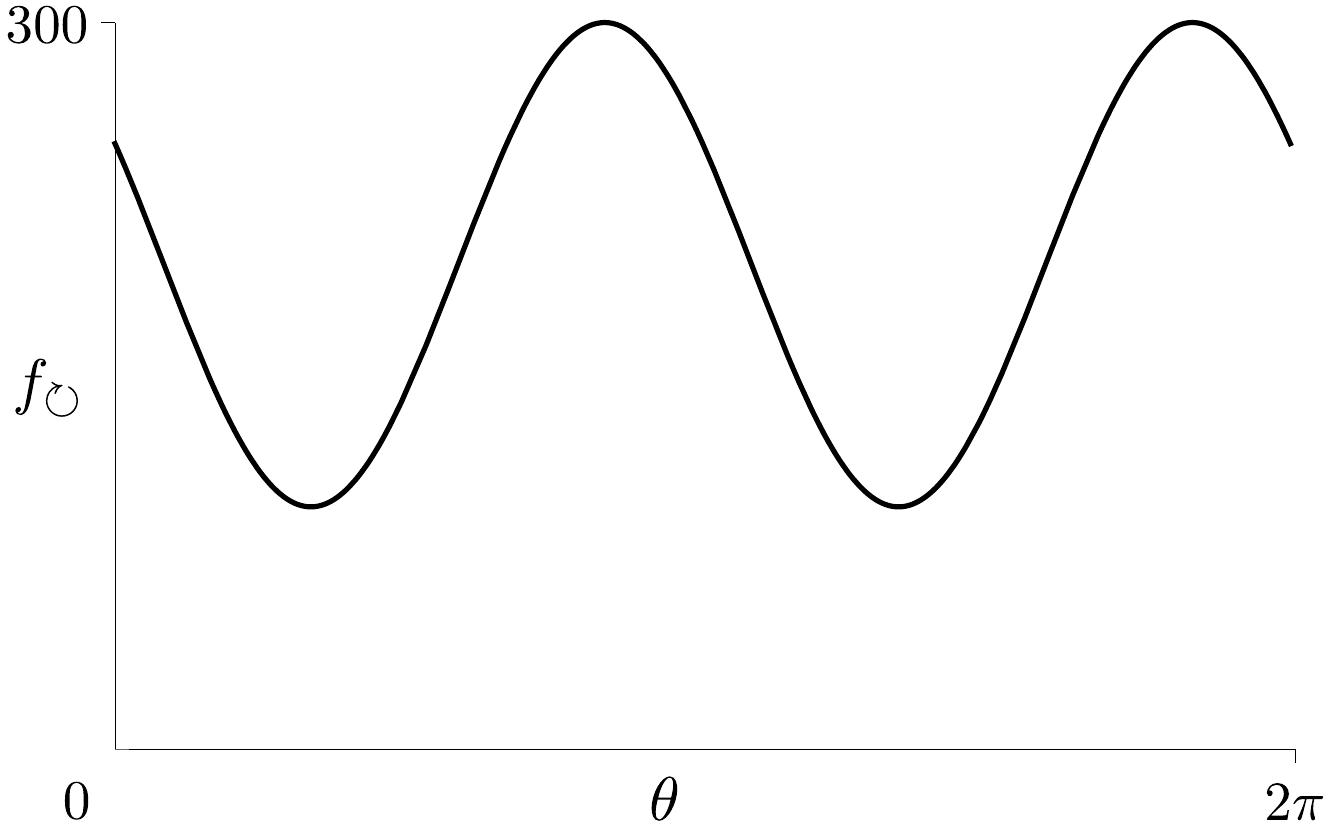}
	\caption{\small
		Applied force corresponding to expression~(\ref{eq:force}): $b=0.5$\,, $c=\pi/6$\,, $f_{0}=200\,{\rm N}$
	}
	\label{fig:FigFaveroForce}
\end{figure}

In accordance with expression~(\ref{eq:formula}), the instantaneous power, at $\theta$\,, is $P_{\circlearrowright}(\theta)=f_{\circlearrowright}(\theta)\,v_{\circlearrowright}(\theta)$\,, and the average power over the revolution is
\begin{equation}
	\label{eq:InsPow}
	\overline{P}_{\circlearrowright}
	=
	\dfrac{1}{2\pi}
	\int\limits_{0}^{2\pi}
	P_{\circlearrowright}(\theta)
	{\,\rm d}\theta
	=
	\left(2+a\,b\cos(2\,c)\right)\pi r f_{0}\,.
\end{equation}
If we consider the average value of speed,
\begin{equation}
	\label{eq:velocity}
	\overline{v}_{\circlearrowright}
	=
	\dfrac{1}{2\pi}
	\int\limits_{0}^{2\pi}
	v_{\circlearrowright}(\theta)
	{\,\rm d}\theta
	=
	2\pi r
	\,,	
\end{equation}
then,
\begin{equation}
	\label{eq:AvePow}
	\overline{P}_{\circlearrowright}
	=
	\dfrac{1}{2\pi}
	\int\limits_{0}^{2\pi}
	f_{\circlearrowright}(\theta)\,\overline{v}_{\circlearrowright}
	{\,\rm d}\theta
	=
	2\pi r f_{0}
	\,,	
\end{equation}
over one revolution;  $a$\,, $b$ and $c$ have no effect on $\overline{P}_{\circlearrowright}$\,.
Examining expressions~(\ref{eq:InsPow}) and (\ref{eq:AvePow}), we see that the former reduces to the latter if $a\,b=0$ or if $c=\pi/4$ or $c=3\pi/4$\,.
Otherwise, the power over a revolution\,---\,based on the instantaneous speed\,---\,is different from the power based on the speed averaged for each revolution.
One might note that expression~(\ref{eq:AvePow}) can be also obtained as the product of expression~(\ref{eq:velocity}) and
\begin{equation}
	\label{eq:BarF}
	\overline{f}_{\circlearrowright}
	=
	\frac{1}{2\pi}
	\int\limits_{0}^{2\pi}
	f_{0}\left(1+b\cos(2(\theta+c)\right)
	{\rm d}\theta
	=
	f_{0}
	\,,
\end{equation}
which is the average force per revolution that results from expression~(\ref{eq:force})\,.	
\begin{figure}
	\centering
	\includegraphics[scale=0.5]{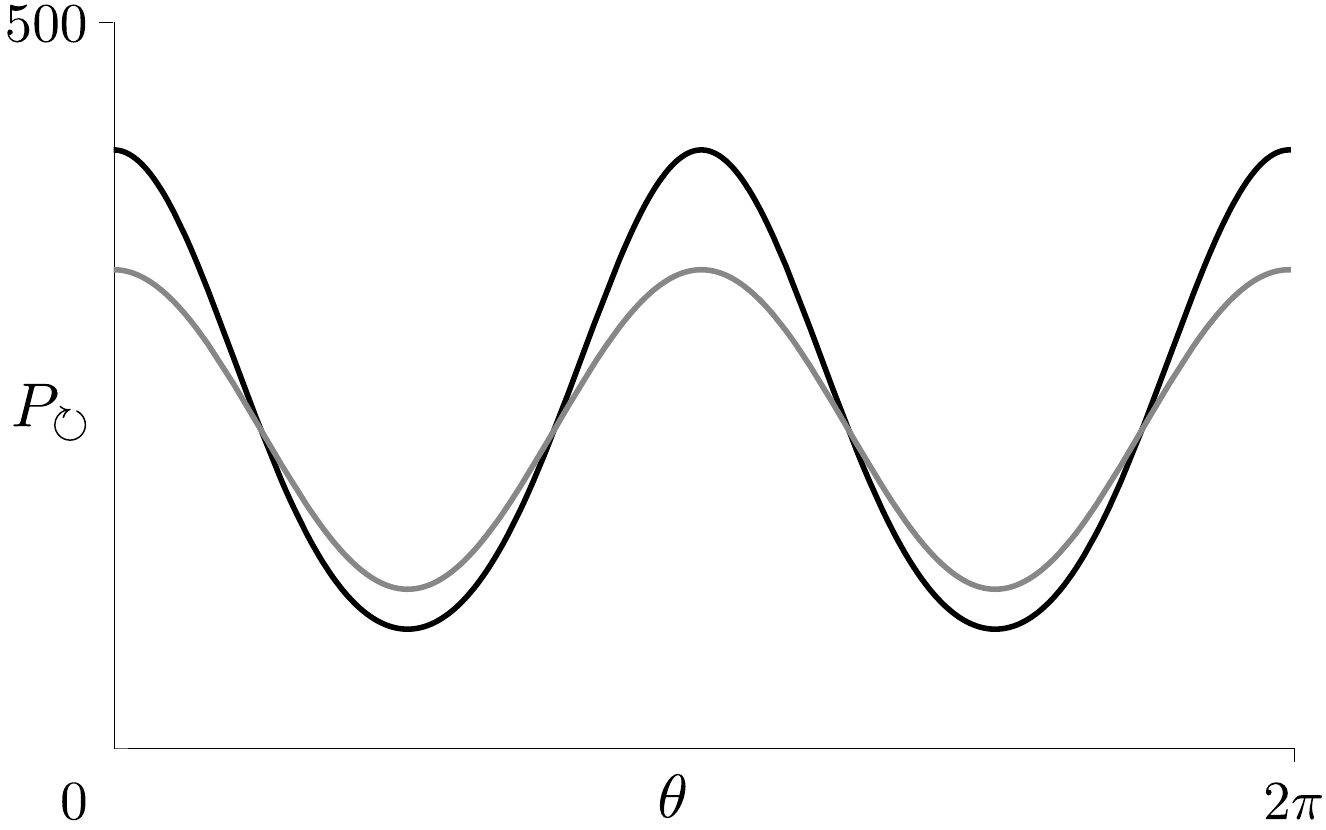}
	\caption{\small
		Instantaneous power corresponding to expressions~(\ref{eq:InsPow}) and (\ref{eq:AvePow}); the former shown in black and the latter in grey, with averages of $227\,{\rm W}$ and $220\,{\rm W}$\,, respectively
	}
	\label{fig:FigPowerRot}
\end{figure}

The integrands of expressions~(\ref{eq:InsPow}), with $c=0$\,, and (\ref{eq:AvePow}) are illustrated in Figure~\ref{fig:FigPowerRot}.
Therein, following expressions~(\ref{eq:speed}) and (\ref{eq:force}), the integrand of expression~(\ref{eq:InsPow})  is
\begin{equation*}
	P_{\circlearrowright}(\theta)
	=
	f_{\circlearrowright}(\theta)\,
	v_{\circlearrowright}(\theta)
	=
	f_{0}\,(1+b\cos(2(\theta+c)))\,v_{0}(1+a\cos(2\theta))
	\,.
\end{equation*}
Invoking trigonometric identities and rearranging, we write it as
\begin{equation*}
	P_{\circlearrowright}(\theta)
	=
	f_{0}\,v_{0}
	\left(
		\underbrace{1+\dfrac{ab}{2}\cos(2c)}_{\rm constant}
		+
		\underbrace{
			\dfrac{}{}a\cos(2\theta) + b\cos(2(\theta+c))
		}_{\rm double\,frequency}
		+
		\underbrace{
			\dfrac{ab}{2}\cos(4\theta+2c)
		}_{\rm quadruple\,frequency}\,
	\right)
	\,.
\end{equation*}
However, the effect of the third summand is small enough not to appear in Figure~\ref{fig:FigPowerRot}.
For instance, if we let $c=0$\,, the double-frequency term becomes $(a+b)\cos(2\theta)$ and the quadruple frequency term becomes $\tfrac{ab}{2}\cos(4\theta)$\,.
If $a<1$ and $b<1$\,, the amplitude of the third summand is much smaller, and the appearance of the plot is dominated by the double-frequency term.
\subsection{Numerical examples}
If we let $a= 0.25$\,, $b = 0.5$\,, $c = 0$\,, $f_{0} = 200\,{\rm N}$\,, $v_{0}=2\pi r$ and $r=0.175\,{\rm m}$\,, expression~(\ref{eq:InsPow}) results in $234\,{\rm W}$, as the average power per revolution, and expression~(\ref{eq:AvePow}) in $220\,{\rm W}$.
The approach that neglects speed variations during the revolution can also overestimate the average power.
If we let $c=\pi/2$\,, expression~(\ref{eq:InsPow}) results in $\overline{P}=206\,{\rm W}$ and expression~(\ref{eq:AvePow}) remains unchanged.
These results are illustrated in Figure~\ref{fig:FigAverage}.
\begin{figure}
	\centering
	\includegraphics[scale=0.5]{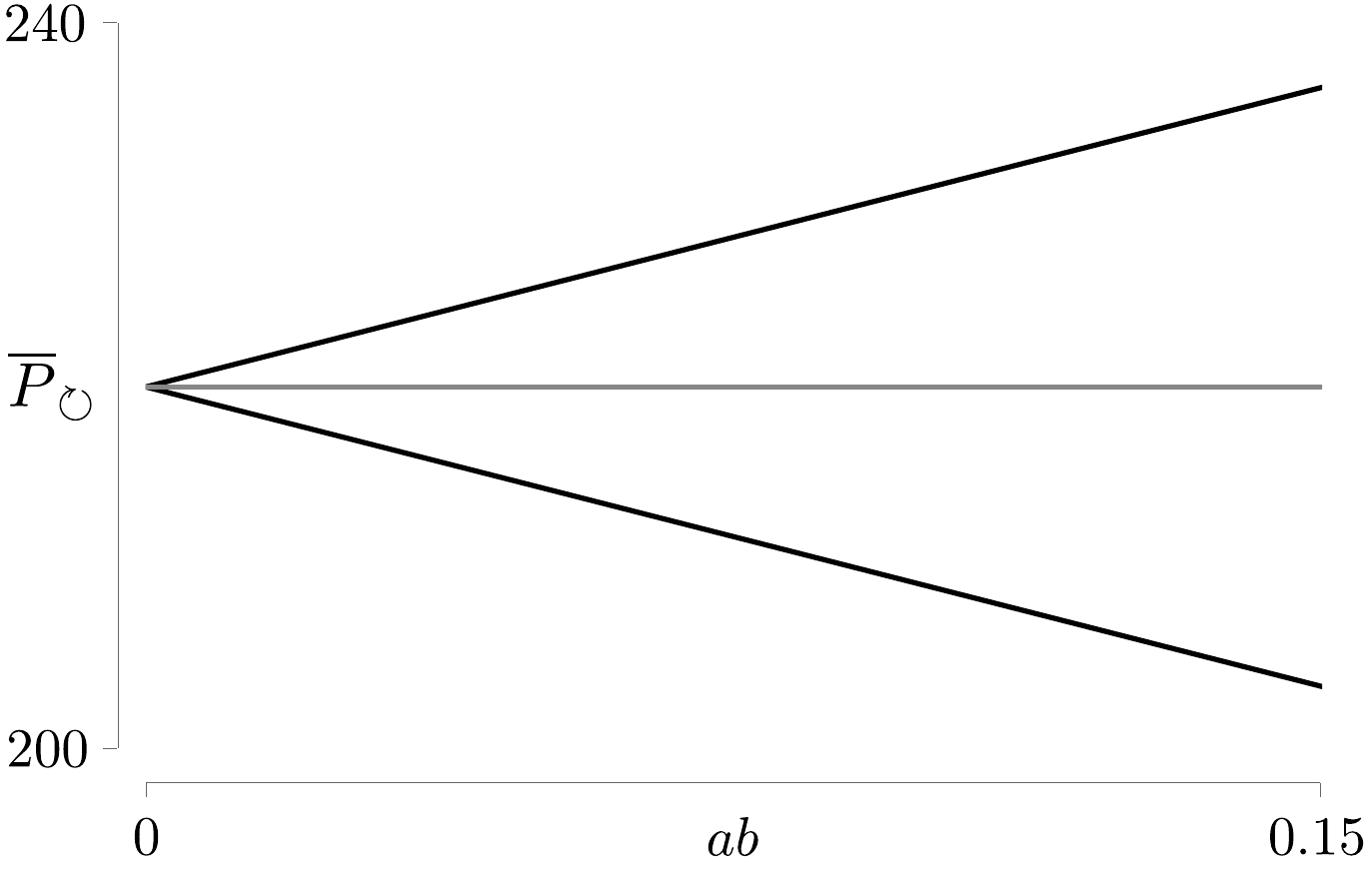}
	\caption{\small
		Average power corresponding to expressions~(\ref{eq:InsPow}) and (\ref{eq:AvePow}); the former shown in black and the latter in grey; the former depends on $ab$\,, the latter does not; the former depends on $c$\,, the latter does not; $c=0$\,, for the increasing line, $c=\pi/2$ or $c=3\pi/2$\,, for the decreasing line
	}
	\label{fig:FigAverage}
\end{figure}

Expressions~(\ref{eq:speed})--(\ref{eq:BarF}) refer to a single revolution.
Hence, the values resulting from expressions~(\ref{eq:InsPow}) and (\ref{eq:AvePow}) remain the same, regardless of cadence; they are averages over one rotation.

If the pedaling is smoother, as one might expect for higher cadences, the values of $a$ and $b$ become smaller.
Since these values are smaller than unity and appear as a product, expression~(\ref{eq:InsPow}) might approach expression~(\ref{eq:AvePow}).
If we let $a=0.1$\,, $b=0.3$\,, $c=0$\,, $f_{0}=200\,{\rm N}$ and $r=0.175\,{\rm m}$\,, expression~(\ref{eq:InsPow}) results in $\overline{P}=223\,{\rm W}$\,; the value of expression~(\ref{eq:AvePow}) remains unchanged.

Furthermore, for a single revolution, there is a unique pair of force and speed that results in a power given by expressions~(\ref{eq:InsPow}) and (\ref{eq:AvePow}).
However\,---\,for a given time interval and various cadences\,---\,there are many pairs of force and speed that result in the same value of power.

\begin{figure}
	\centering
	\includegraphics[scale=0.5]{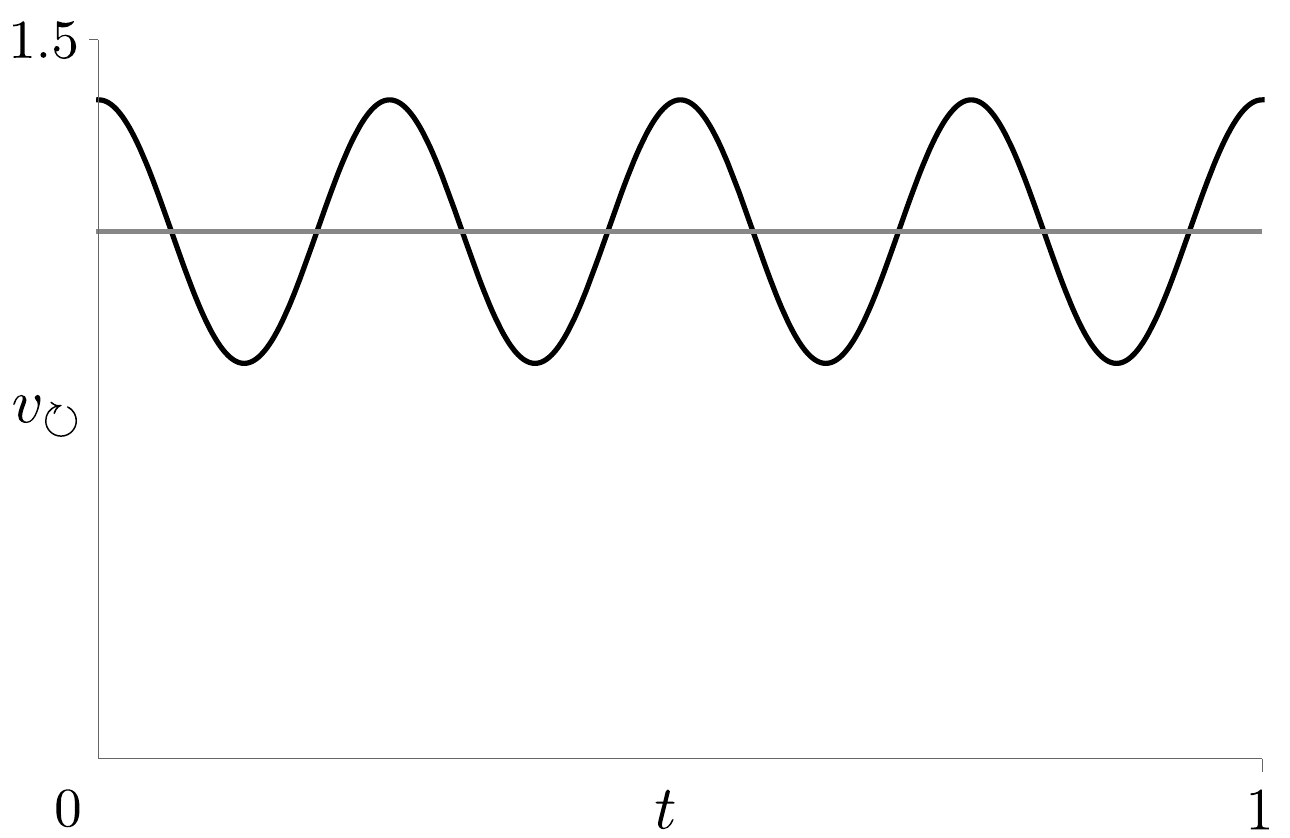}
	\caption{\small
		Circumferential speeds corresponding to expression~(\ref{eq:speed2}): $a=0.25$\,, $r=0.175\,{\rm m}$ and $a=0$\,, $r=0.175\,{\rm m}$\,;  the former shown in black and the latter in grey
	}
	\label{fig:FigRenzoOne}
\end{figure}
\begin{figure}
	\centering
	\includegraphics[scale=0.5]{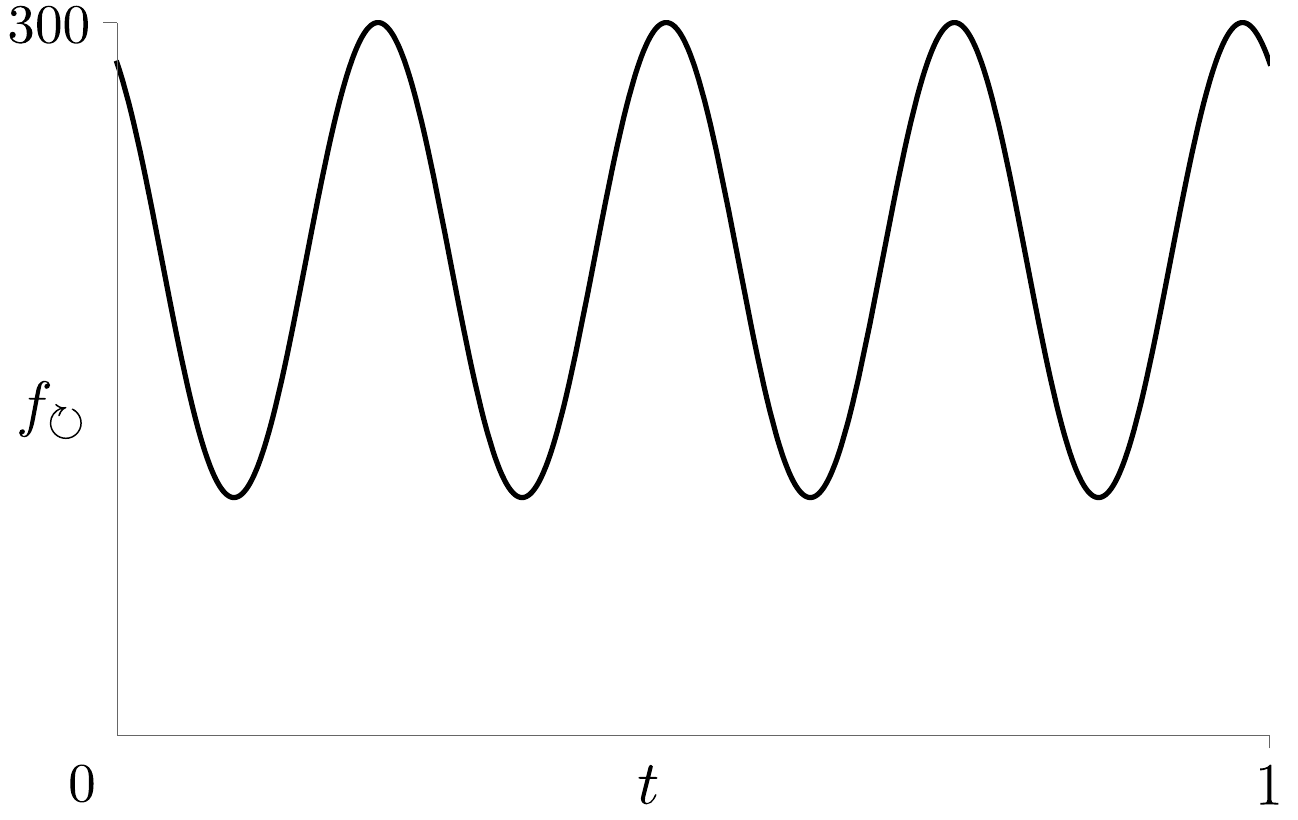}
	\caption{\small
		Applied force corresponding to expression~(\ref{eq:force}): $b=0.5$\,, $c=\pi/6$\,, $f_{0}=200\,{\rm N}$
	}
	\label{fig:FigRenzoTwo}
\end{figure}
For expression~(\ref{eq:speed}) to correspond to two revolutions per second, we modify it to be
\begin{equation}
	\label{eq:speed2}
	v_{\circlearrowright}(t)
	=
	4\pi r(1+a\cos(8\pi t))
	\,,\qquad
	t\in(0,1]
	\,,
\end{equation}
where $t$ stands for time; expression~(\ref{eq:speed2}) is illustrated in Figure~\ref{fig:FigRenzoOne}.
Accordingly, we modify expression~(\ref{eq:force}) to be
\begin{equation}
	\label{eq:force2}
	f_{\circlearrowright}(t)
	=
	f_{0}\,(1+b\cos(8\pi(t+c)))
	\,,\qquad
	t\in(0,1]
	\,,
\end{equation}
where $c$ is a time shift between $v_{\circlearrowright}$ and $f_{\circlearrowright}$\,, which is a constant whose units are seconds; expression~(\ref{eq:force2}) is illustrated in Figure~\ref{fig:FigRenzoTwo}.
Hence, expression~(\ref{eq:InsPow}) becomes
\begin{align}
	\nonumber
	\overline{P}_{\circlearrowright}
	&=
	\int\limits_{0}^{1}
	v_{\circlearrowright}(t)\,f_{\circlearrowright}(t)
	{\,\rm d}t
	=
	\int\limits_{0}^{1}
	\overbrace{
		4\pi r(1+a\cos(8\pi t))
	}^{v_{\circlearrowright}(t)}
	\overbrace{
		f_{0}(1+b\cos(8\pi(t+c)))
	}^{f_{\circlearrowright}(t)}
	{\,\rm d}t
	\\
	\label{eq:AveTime}
	&=
	2(2+ab\cos(8\pi c))\pi r f_{0}\,.
\end{align}
Examining expressions~(\ref{eq:InsPow}) and (\ref{eq:AveTime}), we see that to keep the same average power per second\,---\,with $ab=0$ or $c=0$\,---\,we need to halve the value of~$f_{0}$\,.
Otherwise, the ratio is
\begin{equation*}
	\dfrac{2+ab\cos(2\pi c)}{2\left(2 + ab\cos(8\pi c)\right)}
	\,.
\end{equation*}

In accordance with expression~(\ref{eq:InsPow}), which is tantamount to the power averaged over one second\,---\,if the cadence is one revolution a second\,---\,and given $a=0.25$\,, $b=0.5$\,, $c=0$\,, $f_{0}=200\,{\rm N}$\,, $r=0.175\,{\rm m}$\,, we have $\overline{P}_{\circlearrowright}=234\,{\rm W}$\,.
With a cadence of two revolutions a second, in accordance with expression~(\ref{eq:AveTime}), the same average power is obtained with~$f_{0}=100\,{\rm N}$\,.
Thus, among many possible pairs that result in $\overline{P}_{\circlearrowright}=234\,{\rm W}$\,, we have $(60$~rpm, $f_{0}=200\,{\rm N})$ and $(120$~rpm, $f_{0}=100\,{\rm N})$\,.

If $a=0$\,, in accordance with expression~(\ref{eq:InsPow}) and (\ref{eq:AvePow}), $\overline{P}_{\circlearrowright}=2\pi r f_{0}$\,, per second, and, in accordance with expression~(\ref{eq:AveTime}), $\overline{P}_{\circlearrowright}=4\pi r f_{0}$\,, per second.
Thus, to keep the same average power, we again halve the value of~$f_{0}$\,.
If the original value, at $60$~rpm, is $\overline{P}_{\circlearrowright}=234\,{\rm W}$\,, the corresponding value is calculated to be $f_{0}=213\,{\rm N}$\,, instead of~$200\,{\rm N}$\,.
This results in a different\,---\,and less accurate\,---\,pair, due to neglecting speed variation during a revolution.
\subsection{Closing remarks}
As illustrated in this appendix, there is a discrepancy between the power-meter calculations resulting from the use of the instantaneous-pedal-speed and average-pedal-speed information.
Removing this discrepancy is crucial for a variety of information that rely on power measurements, as is the case of this article.

Let us conclude by addressing the issue of sampling with regards to the discrepancy in power calculation resulting from $\overline{v}_{\circlearrowright}$ as opposed to $v_{\circlearrowright}(\theta)$\,. 
Let us consider the gear of $54\times17$\,.
For a road bicycle, one revolution results in a development of $6.67$~metres.
Hence, for the speed of $48.1$~kilometres per hour, a full rotation corresponds to half a second, which is a high time-trial cadence of $120$~revolutions per minute.
The sampling of twice-a-second, however, is insufficient for accurate information about power.
For that reason, the Favero power meters provide the cycling computer with data that already contains information based on the instantaneous\,---\,as opposed to the average\,---\,pedal speed.
\section{Time minimization with Lagrange multipliers}
\label{app:LagrangeMultipliers}
\setcounter{equation}{0}
\setcounter{figure}{0}
\renewcommand{\theequation}{\Alph{section}.\arabic{equation}}
\renewcommand{\thefigure}{\Alph{section}\arabic{figure}}
\subsection{Preliminary remarks}
Consider a flat course of length~$d$\,, whose one half is covered against the wind, as illustrated in Figure~\ref{fig:FigLagrange}, and the other half with the wind.
\begin{figure}
\centering
\includegraphics[scale=0.75]{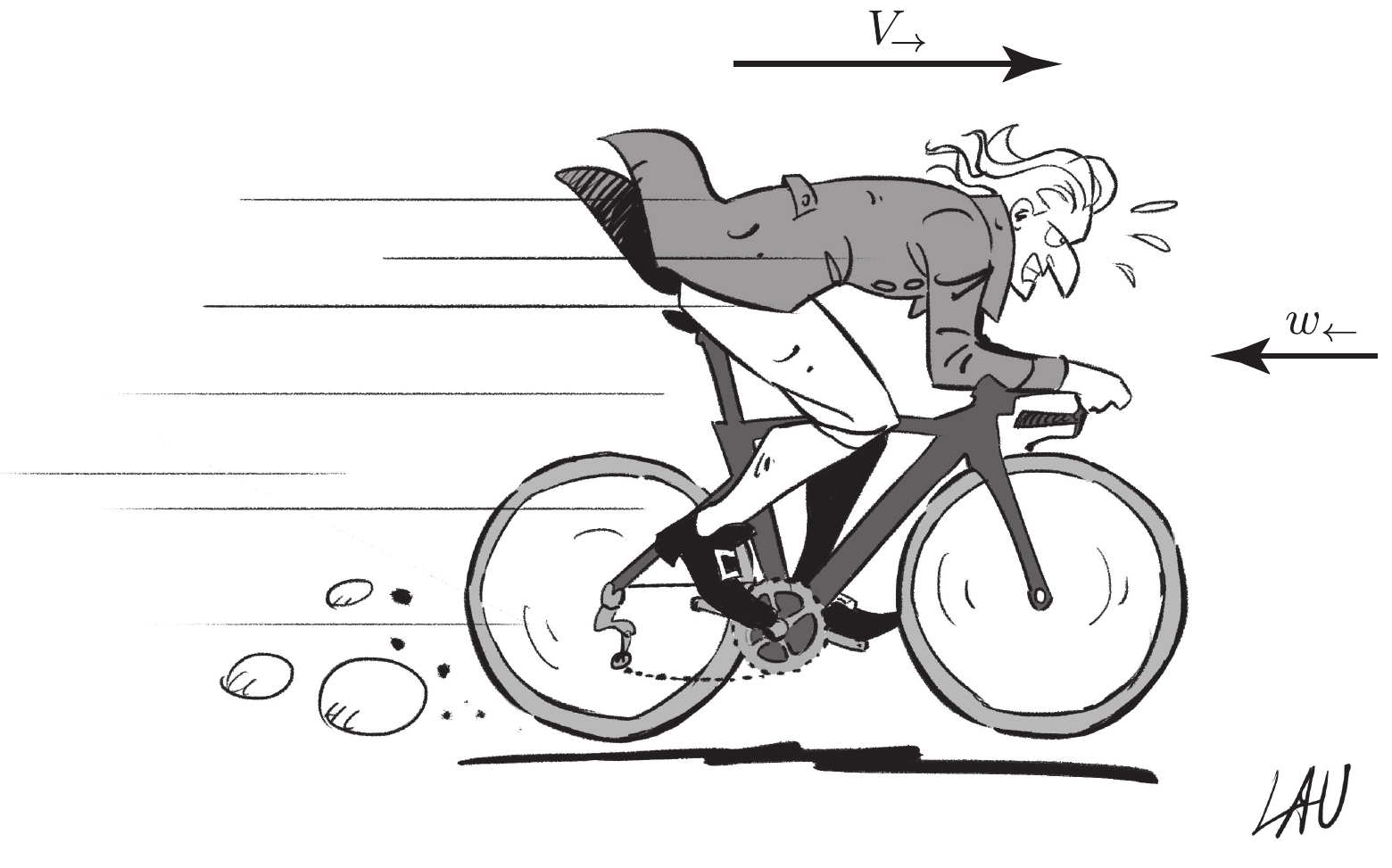}
\caption{\small Joseph-Louis (Giuseppe Luigi) Lagrange examining his optimizations based on the fact that, in general, in accordance with model~\eqref{eq:model}, headwinds,~$w_\leftarrow>0$\,, increase the air resistance, and tailwinds,~$w_\leftarrow<0$\,, decrease it, provided that $w_{\leftarrow}\nless-V_{\!\rightarrow}$\,.}
\label{fig:FigLagrange}
\end{figure}
To minimize the time,~$t$\,, we need to maximize the average speed,
\begin{equation}
	\label{eq:AveV}
	\overline V_{\!\rightarrow}
	=
	\dfrac{d}{t}
	=
	\dfrac{d}{\dfrac{d}{2\,V_{U}}+\dfrac{d}{2\,V_{D}}}
	=
	\dfrac{2\,V_{U}V_{D}}{V_{U}+V_{D}}
	\,,
\end{equation}
where $V_{U}$ and $V_{D}$ are the speeds on the upwind and downwind sections, respectively.
The maximum of this function occurs for all values along $V_{U}=V_{D}$\,.
To get a pair of values that corresponds to a realistic scenario, we invoke the method of Lagrange multipliers and find the maximum of speed~\eqref{eq:AveV}, subject to constraints.
To do so, we state the problem as a Lagrangian function of two variables with $n$ constraints, 
\begin{equation}
	\label{eq:L_general}
	L(V_{U},V_{D})
	=
	\overline V_{\!\!\rightarrow}(V_{U},V_{D})
	+
	\Lambda_{1}\,\Gamma_{1}(V_{U},V_{D})
	+
	\dots
	+
	\Lambda_{n}\,\Gamma_{n}(V_{U},V_{D})
	\,,	
\end{equation}
where $\Lambda_i$\,, with $i=1,\ldots,n$\,, is a Lagrange multiplier.
The optimization is achieved at the stationary points of function~\eqref{eq:L_general}, which we find by solving the system of equations,
\begin{equation}
	\label{eq:LagPar}
	\dfrac{\partial L}{\partial V_{U}}=0
	\,,\quad
	\dfrac{\partial L}{\partial V_{D}}=0
	\,,\quad
	\dfrac{\partial L}{\partial\Lambda_{1}}= 0
	\,,\quad\ldots\,,\quad
	\dfrac{\partial L}{\partial\Lambda_{n}}=0
	\,,
\end{equation}
whose solution is the pair,~$V_{U},V_{D}$\,, that extremizes expression~(\ref{eq:AveV}) and satisfies the constraints,~$\Gamma_i$\,, where $i=1,\ldots,n$\,, within the physical realm.
\subsection{Constraint of total work}
\label{app:ConstraintWork}
Let us impose a constraint in terms of the amount of total work, $W_{0}=W_{U}+W_{D}$\,, to be done by a cyclist on the upwind and downwind sections, whose proportions of length are stated in expression~(\ref{eq:AveV}),
\begin{equation}
	\label{eq:Gamma_{W}}
	\Gamma_{W}
	=
	\overbrace{
		\underbrace{
			\frac{
				{\rm C_{rr}}\,m\,g
				+
				\tfrac{1}{2}\,{\rm C_{d}A}\,\overline\rho
				\left(V_{U}+w_{\leftarrow}\right)^{2}
			}{
				1-\lambda
			}
		}_{F_{\!\leftarrow}}
		\dfrac{d}{2}
	}^{W_{U}}
	+
	\overbrace{
		\underbrace{
			\frac{
				{\rm C_{rr}}\,m\,g
				+
				\tfrac{1}{2}\,{\rm C_{d}A}\,\overline\rho
				\left(V_{D}-w_{\leftarrow}\right)^{2}
			}{
				1-\lambda
			}
		}_{F_{\!\leftarrow}}
		\dfrac{d}{2}
	}^{W_{D}}-W_{0}
	=
	0
	\,.
\end{equation}
Herein, we assume
\begin{equation*}
	W_{0}
	=
	\underbrace{
		\frac{
			{\rm C_{rr}}\,m\,g
			+
			\tfrac{1}{2}\,{\rm C_{d}A}\,\overline\rho\,{\overline V_{\!\rightarrow}}^{2}
		}{
			1-\lambda
		}
	}_{F_{\!\leftarrow}}
	d
\end{equation*}
to be the total amount of energy available to the cyclist, which corresponds to the work done on the same course, with a maximum effort\,---\,with no wind,~$w_{\leftarrow}=0\,{\rm m/s}$\,---\,resulting in a given value of $\overline V_{\!\rightarrow}$\,.
We write function~\eqref{eq:L_general} as
\begin{equation}
	\label{eq:L_{W}}
	L_{W}
	=
	\overline V_{\!\rightarrow}
	+
	\Lambda_{W}\,\Gamma_{W}
	\,.
\end{equation}
Considering $d=10000\,{\rm m}$\,, model parameters stated in Section~\ref{sec:FlatCourseVals}, namely, $m=111\,{\rm kg}$\,, $g=9.81\,{\rm m/s}^2$\,, $\overline\rho=1.204\,{\rm kg/m}^3$\,, ${\rm C_{d}A}=0.2607\,{\rm m}$\,, ${\rm C_{rr}}=0.002310$\,, $\lambda=0.03574$\,, and letting $\overline V_{\!\rightarrow}=10.51\,{\rm m/s}$\,, we obtain $W_{0}=2.059\times10^5\,{\rm J}$\,.
To minimize the traveltime with $w_{\leftarrow}=5\,{\rm m/s}$\,, we write system~(\ref{eq:LagPar}), in terms of function~(\ref{eq:L_{W}}),
\begin{equation}
	\label{eq:L_{W}_system}
	\left\{
	\begin{aligned}
		&\dfrac{\partial L_{W}}{\partial V_{U}}
		=
		\dfrac{2\,{V_{D}}^2}{\left(V_{D} + V_{U}\right)^{2}} 
		+ 
		\Lambda_{W}
		\left(1628\,V_{U}+8138\right)
		=
		0
		\,,
		\\
		&\dfrac{\partial L_{W}}{\partial V_{D}}
		=
		\dfrac{2\,{V_{U}}^2}{\left(V_{D} + V_{U}\right)^{2}}
		+
		\Lambda_{W}
		\left(1628\,V_{D}-8138\right)
		=
		0
		\,,
		\\
		&\dfrac{\partial L_{W}}{\partial\Lambda_{W}}
		= 
		813.8\left({V_{U}}^2 + {V_{D}}^2\right)
		+
		8138\left(V_{U}-V_{D}\right)
		-
		139100
		=
		0
		\,.
	\end{aligned}
	\right.
\end{equation}
Solving system~\eqref{eq:L_{W}_system} numerically, we obtain a single physical solution,
\begin{equation}
	\label{eq:L_{W}_system_sol}
	V_{U} = 8.279\,{\rm m/s}
	\quad{\rm and}\quad
	V_{D} = 11.68\,{\rm m/s}
\end{equation}
which is the pair that both maximizes expression~(\ref{eq:AveV}) and satisfies constraint~(\ref{eq:Gamma_{W}}).

In accordance with expression~(\ref{eq:AveV}), the average speed is $\overline V_{\!\rightarrow} = 9.689\,{\rm m/s}$\,, which is lower than the speed under the assumption of $w_{\leftarrow}=0\,{\rm m/s}$\,, namely, $\overline V_{\!\rightarrow}=10.51\,{\rm m/s}$\,.
This quantifies an adage that riding with the wind does not compensate for the speed lost by riding against the wind.
The loss is due to the dissipation of energy due to the air, rolling and drivetrain resistances, which are present on both the upwind and downwind sections.
\subsection{Constraint of average power}
\label{app:ConstraintPower}
Let us impose a constraint in terms of the value of average power, $P_0$\,, maintained by a cyclist on the upwind and downwind sections. 
In contrast to work, power is not a cumulative quantity.
Hence, the distance does not appear explicitly in a constraint, and we require constraints for both the upwind and downwind sections,
\begin{align}
	\label{eq:Gamma_{P_{U}}}
	\Gamma_{P_{U}}
	&=
	\overbrace{
		\underbrace{
			\frac{
				{\rm C_{rr}}\,m\,g
				+
				\tfrac{1}{2}\,{\rm C_{d}A}\,\overline\rho
				\left(V_{U}+w_{\leftarrow}\right)^{2}
			}{
				1-\lambda
			}
		}_{F_{\!\leftarrow}}
		V_{U}
	}^{P_{U}}
	-
	P_{0}
	\,,
	\\
	\label{eq:Gamma_{P_{D}}}
	\Gamma_{P_{D}}
	&=
	\overbrace{
		\underbrace{
			\frac{
				{\rm C_{rr}}\,m\,g
				+
				\tfrac{1}{2}\,{\rm C_{d}A}\,\overline\rho
				\left(V_{D}-w_{\leftarrow}\right)^{2}
			}{
				1-\lambda
			}
		}_{F_{\!\leftarrow}}
		V_{D}
	}^{P_{D}}
	-
	P_{0}
	\,.
\end{align}
Herein, we assume
\begin{equation*}
	P_{0}
	=
	\underbrace{
		\frac{
			{\rm C_{rr}}\,m\,g
			+
			\tfrac{1}{2}\,{\rm C_{d}A}\,\overline\rho\,{\overline V_{\rightarrow}}^{2}
		}{
			1-\lambda
		}
	}_{F_{\!\leftarrow}}
	\overline V_{\rightarrow}
\end{equation*}
to be the average power available to the cyclist, which corresponds to the average power achieved on the same course, with a maximum effort\,---\,with no wind,~$w_{\leftarrow}=0$\,---\,resulting in a given value of $\overline V_{\!\rightarrow}$\,.
Function~\eqref{eq:L_general} is
\begin{equation}
	\label{eq:L_{P}}
	L_{P}
	=
	\overline V_{\!\rightarrow}
	+
	\Lambda_{P_{U}}\,\Gamma_{P_{U}}
	+
	\Lambda_{P_{D}}\,\Gamma_{P_{D}}
	\,.
\end{equation}
Using the same model parameters in Appendix~\ref{app:ConstraintWork}, we obtain $P_{0} = 216.4\,{\rm W}$\,.
To minimize the traveltime with $w_{\leftarrow}=5$\,m/s\,, we write system~(\ref{eq:LagPar}), in terms of function~(\ref{eq:L_{P}}),
\begin{equation*}
	\left\{
	\begin{aligned}
		\dfrac{\partial L_{P}}{\partial V_{U}}
		&=
		\dfrac{2\,{V_{D}}^2}{\left(V_{D} + V_{U}\right)^{2}} 
		+ 
		\Lambda_{P_{U}}
		\left(0.4883\,{V_{U}}^2+3.255\,V_{U}+6.678\right)
		=
		0
		\,,
		\\
		\dfrac{\partial L_{P}}{\partial V_{D}}
		&=
		\dfrac{2\,{V_{U}}^2}{\left(V_{D} + V_{U}\right)^{2}}
		+
		\Lambda_{P_{D}}
		\left(0.4883\,{V_{D}}^2-3.255\,V_{D}+6.678\right)
		=
		0
		\,,
		\\
		\dfrac{\partial L_{P}}{\partial\Lambda_{P_{U}}}
		&=
		0.1628\,{V_{U}}^{3} 
		+ 
		1.628\,{V_{U}}^{2} 
		+ 
		6.678\,V_{U}
		-
		216.4
		=
		0
		\,,
		\\
		\dfrac{\partial L_{P}}{\partial\Lambda_{P_{D}}}
		&= 
		0.1628\,{V_{U}}^{3} 
		- 
		1.628\,{V_{U}}^{2} 
		+ 
		6.678\,V_{U} 
		-
		216.4
		=
		0
		\,.
	\end{aligned}
	\right.
\end{equation*}
The single physical solution is
\begin{equation}
	\label{eq:L_{P}_system_sol}
	V_{U} = 7.603\,{\rm m/s}
	\quad{\rm and}\quad
	V_{D} = 13.92\,{\rm m/s}
	\,,
\end{equation}
which both maximizes expression~\eqref{eq:AveV} and satisfies constraints~\eqref{eq:Gamma_{P_{U}}} and~\eqref{eq:Gamma_{P_{D}}}.
The corresponding average is $\overline V_{\!\!\rightarrow} = 9.833\,{\rm m/s}$\,, which confirms an adage that riding with the wind does not compensate for the speed lost by riding against the wind.
\subsection{Relation between differences and derivatives}
To conclude this appendix, let us comment on difference~$\Delta V_{\!\rightarrow}/\Delta w_{\leftarrow}$\,, discussed herein, in the context of $\partial_{w_{\leftarrow}}V_{\!\rightarrow}$\,, whose value is presented in Table~\ref{table:PhysicalRates}.
Partial derivatives correspond to a tangent to a curve at a point, and the differences to a secant over a segment of the curve.
Also, partial derivatives are obtained under the assumption that all other quantities are constant.

The latter requirement is satisfied in Appendix~\ref{app:ConstraintPower}, where
\begin{equation*}
	\frac{\Delta V_{\rightarrow}}{\Delta w_{\leftarrow}}
	=
	\frac{V_{U}-V_{D}}{w_{\leftarrow}-(-w_{\leftarrow})}
	=
	\frac{7.603\,{\rm m/s}-13.92\,{\rm m/s}}{10\,{\rm m/s}}
	=
	-0.6314\,,
\end{equation*}
which agrees with $\partial_{w_{\leftarrow}}V_{\rightarrow}$\,, in Table~\ref{table:PhysicalRates}, to two decimal points.
For $w_{\leftarrow}=0.05\,{\rm m/s}$\,, we obtain $V_{U} = 10.48\,{\rm m/s}$ and $V_{D} = 10.54\,{\rm m/s}$\,; hence, $\Delta V_{\rightarrow}/\Delta w_{\leftarrow}=-0.6359$\,, which agrees with $\partial_{w_{\leftarrow}}V_{\rightarrow}$ to six decimal points.
In general,
\begin{equation*}
	\lim_{\Delta w_{\leftarrow}\rightarrow0}
	\frac{\Delta V_{\rightarrow}}{\Delta w_{\leftarrow}}
	=
	\frac{\partial V_{\rightarrow}}{\partial w_{\leftarrow}}
	\,,
\end{equation*}
as expected, in view of a secant approaching a tangent.

The requirement of constant quantities is not satisfied in Appendix~\ref{app:ConstraintWork}, since $P$ is allowed to vary to maintain the imposed value of~$W_{0}$\,.
In Appendix~\ref{app:ConstraintPower}, $W$ varies to maintain the imposed value of~$P_0$\,, but $W$ is not a variable in function~(\ref{eq:f}), used in partial derivatives.

As shown in this appendix, properties of partial derivatives need to be considered in examining time-trial strategies.
In contrast to common optimization methods, partial derivatives correspond to a change of a single variable, only.
\subsection{Closing remarks}
Let us examine the constraints discussed in this appendix in terms of required powers.
For the work constraint, the average speed is~$\overline V_{\!\rightarrow}= 9.689\,{\rm m/s}$\,.
Following model~\eqref{eq:model}, the required powers are $P_{U}=365.5\,{\rm W}$ and $P_{D}=59.95\,{\rm W}$\,, for the upwind and downwind sections, respectively.
For windless conditions, we have $\overline{P} =173.3\,{\rm W}$\,.
Thus, $P_{U}$ is significantly greater than~$\overline P$\,.

For the power constraint, with $P_{0}=216.4\,{\rm W}$\,, the average speed is~$\overline V_{\!\rightarrow}= 9.833\,{\rm m/s}$\,.
Since the average speed is greater than for the work-constraint optimization and the average power does not exceed the value obtained in windless conditions, this appears to be the preferable strategy.
Also, power is provided as an instantaneous quantity by the power meters, which allows the cyclist to follow a given strategy, whose further refinements are to be considered in future studies.

To close, let us consider a rule of thumb of~\citet{Anton2013}.
\begin{quote}
	Choose a target-speed~$v_0$\,.
	$\,[\ldots]\,$
	Endeavor to ride at $v\cong v_0+w/4$ when the wind is at your back and at
	$v\cong v_0-w/2$ when the wind is at your face.	
\end{quote}
If we choose $\overline V_{\!\rightarrow}=10.51\,{\rm m/s}=:v_0$ to be a target speed, with $w=5\,{\rm m/s}$\,, speeds~\eqref{eq:L_{P}_system_sol}, which result from the power constraint, are less congruent with this rule than speeds~\eqref{eq:L_{W}_system_sol}, which result from the work constraint, yet\,---\,according to the present analysis\,---\,speeds~\eqref{eq:L_{P}_system_sol} appear to be preferable.
This is an indication of further subtleties that need to be considered in developing a time-trial strategy.
\end{appendix}
\end{document}